\documentclass[a4paper,runningheads,USenglish,thm-restate]{llncs}

\newcommand{\LV}[1]{#1}
\newcommand{\SV}[1]{}

\usepackage[OT2,T1]{fontenc}
\usepackage[USenglish]{babel}

\usepackage{placeins}

\LV{\usepackage[appendix=inline]{apxproof}}
\SV{\usepackage{apxproof}}%
\SV{}

\newtheoremrep{theorem}{Theorem}

\SV{\usepackage{lineno}}
\SV{\linenumbers}

\usepackage{amsmath,amssymb,amsfonts}
\usepackage{graphicx}
\usepackage[colorinlistoftodos]{todonotes}
\usepackage{hyperref}
\usepackage{xspace}
\usepackage{comment}
\usepackage{cancel}
\usepackage{mathtools}
\usepackage[all]{xy}

\def \emptyword{\lambda}

\usepackage{todonotes}
\usepackage{url}

\renewcommand{\phi}{\varphi}

\newcommand{\cL}{\mathcal{L}}

\newcommand{\todoJS}[1]{\todo[inline]{Jana: #1}}

\title{Forbidden-Context \& Ordered Grammar Systems}\author{Henning Fernau \inst{1}\orcidID{0000-0002-4444-3220} \and\\ Lakshmanan Kuppusamy\inst{2}\orcidID{0000-0003-2358-905X} \and\\Jana Schulz\inst{3}\orcidID{0009-0006-3168-2202}}
\authorrunning{H. Fernau, L. Kuppusamy, J. Schulz}

\institute{Fachbereich 4 -- Abteilung Informatikwissenschaften, Universit\"at Trier, 54286 Trier, Germany, \email{fernau@uni-trier.de}
 \and School of Computer Science and Engineering, VIT University, Vellore - 632~014, India, \email{klakshma@vit.ac.in} \and 
 Institut für Informatik und Computational Science, Universit\"at Potsdam, 14476 Potsdam, Germany, \email{jana.schulz@uni-potsdam.de}}  

\date{March 2026}

\begin{document}

\maketitle

\begin{abstract}
In this paper, we consider combining the ideas of forbidden random context grammars as well as of ordered grammars with cooperating distributed grammar systems (CDGS). We focus on investigating their generative capacities. Both ideas can be added to CDGS in two ways: either having (e.g.) a strict order of the rules in each component, or having a strict order on the components. This leads to four different scenarios, only some of them have been addressed  in the literature before. While in the area of CDGS, many inclusions among language classes have been 
open questions for decades, the proposed addition of forbidden random context and ordered regulation variants leads to a clear picture which allows us to get down to only five different classes of languages well known from classical regulated rewriting. This way, we also solve some open problems from the literature.   
\end{abstract}
\keywords{Grammar Systems  \and  Regulated Rewriting \and Ordered Grammars}
%

\setcounter{footnote}{0}

\section{Introduction}

Cooperating distributed grammar systems (CDGS) with context-free rules have been introduced as a formal model of blackboard approaches in artificial intelligence. Different so-called modes have been introduced that govern the interaction between the components of a CDGS. For instance, one could prescribe that each component performs exactly $k$ steps before writing the sentential form back on the board so that others could take over; the $=2$-mode was also recently studied in the context of parsing~\cite{BorFerVas2026}.
Although CDGS have been extensively studied, in particular in the 1990s, many questions concerning their generative capacities are still open today; the inclusion diagram shown on page 72 of the well-known monograph~\cite{Csuetal94all} on grammar systems indicates many of these questions. In particular, none of these classical variants is known to describe the class \textsf{RE} of recursively enumerable languages. 
One idea to understand these questions is to add some more power to the rules and see how the inclusions might change by doing so. We follow this line of research here.

There are several papers that study grammar systems equipped with rules that have context conditions.
In particular, Masopust studies in \cite{Mas2009} CDGS with context-free rules equipped with forbidden nonterminal sets and proves that in the $t$-mode of derivation, these CDGS are equivalent to programmed grammars with appearance checking, both when allowing or disallowing erasing productions; i.e., when allowing them, they characterize \textsf{RE}.
In \cite{CsuMasVas2009}, a similar study was undertaken for CDGS  with context-free rules equipped with permitting nonterminal sets and it was proved that in the $t$-mode of derivation, such systems are as powerful as permitting random context grammars. In \cite{CsuMasVas2009,GolMasMed2010,Mas2009}, also a so-called left-permitting respectively left-forbidding derivation was considered; generally speaking, this variant seems to be more powerful, but in this paper, we like to focus on the classical variants of random context rules as introduced in~\cite{Wal72}.
However, interestingly, in connection with left-permitting rules, in \cite{GolMasMed2010}, also other modes of derivation have been examined and it was shown that then, e.g., characterizations of \textsf{RE} can be obtained.
In~\cite{Csuetal94all}, the fourth chapter is dedicated to certain control mechanisms integrated into grammar systems.
In Subsection 4.2, context conditions (e.g., random context conditions) work as filters, either applied as entry or also as exit 
conditions concerning the work of one grammar component. In this filter interpretation, forbidden context has not been  studied as we  do in this paper, in combination with all possible derivation modes.
\LV{It should be mentioned that forbidden context has been studied also with other types of grammatical systems. For instance, in \cite{FreOsw2002} it was shown that certain membrane systems arrive at computational completeness by adding forbidden contexts. Similarly, in \cite{MedSve2003} it was shown that forbidden context added to E(T)0L systems characterizes \textsf{RE}.}

Moreover, we consider grammar systems with ordered context-free rules, taking forward the classical idea of ordered grammars, introduced in~\cite{Fri68}. For more information, we refer to \cite{DasPau85,DasPau89,Fer96a}. Ordered grammars are well-known to be equivalent to forbidden random context grammars; we will see that this equivalence partially carries over to CDGS: if erasing rules are permitted, they characterize \textsf{RE}, but not in all modes. 
 Mitrana \emph{et al.}~\cite{MitPauRoz94} studied CDGS with \emph{priorities}, which means that there exists an ordering (in the sense of ordered grammars) on the components of the grammar system. This is again a different approach, as we have an ordering per component on the sets of rules.

We survey the approach of this paper in Table~\ref{tab:conceptual-survey}. Notice that most of the cited papers do not consider all classical modes, or at least they do not fully characterize the defined language classes as we will do in this paper.

\begin{table}[bt]
    \centering
    \begin{tabular}{|l|l|l|}\hline
     & strict ordering    & forbidden random context \\\hline
 w.r.t. rules in each component      & Section~\ref{sec:ocdgs} & \cite{Mas2009}, Section~\ref{sec:RCCDGS} \\
 between components & \cite{MitPauRoz94}, Section~\ref{sec:PCDGS}  & \cite{Csuetal94all} (entry filters), Section~\ref{sec:CDGS-rc} 
 \\\hline
    \end{tabular}
    \caption{Organization of the concepts and plan of the paper.}
    \label{tab:conceptual-survey}
\end{table}

\smallskip
\noindent\underline{Conventions.} We usually employ standard notations, e.g., $[n]=\{1,\dots,n\}$ for a positive integer~$n$; this notation is sometimes extended as $[i\dots j]$ for integers $i,j$, denoting the set of all integers $k$ with $i\leq k\leq j$.
Moreover, if $B=\{b\}$ is a singleton set, we write $A-b$ instead of $A\setminus B$.

\section{Ordered Cooperating Distributed Grammar Systems}
\label{sec:ocdgs}

An \emph{ordered (context-free) grammar} (introduced in~\cite{Fri68}) is a construct $G=(N,\Sigma,P,>,S)$, where 
$(N,\Sigma,P,S)$ is a classical context-free grammar, containing rules of the form $A\to y$, and $>$ is a strict ordering on~$P$ (i.e., $>$ is transitive, asymmetric, and hence also irreflexive).
The derivation relation $\Rightarrow$ is defined as follows.
For $u,v\in (N\cup\Sigma)^*$, $u\Rightarrow v$ if 
\begin{itemize}
    \item $\exists A\to y\in P\, \exists x,z\in (N\cup\Sigma)^*: u=xAz, v=xyz$ and
    \item $\forall A'\to y'\in P:((A'\to y')>(A\to y))\implies |u|_{A'}=0$.
\end{itemize}
While the first condition is just the classical rewrite condition, the second one prohibits that any larger rule might have been applicable.
When we specify an ordered grammar, we often do this by writing, e.g., 
$$\{X\to X\mid X\in\Phi\}>\{Y\to Y'\mid Y\in \Psi\}>Z\to Z''\,,$$
meaning that for all $X\in\Phi$ and for all $Y\in \Psi$, we have $X\to X>Y\to Y'$ and $Y\to Y'>Z\to Z''$ and hence also $X\to X>Z\to Z''$ by transitivity. By allowing or disallowing erasing rules, we arrive at the language classes $\textsf{ORD}$ and $\textsf{ORD}^{-\emptyword}$ respectively. Even if we only consider languages that are subset of $\Sigma^+$, i.e., ignoring the empty word, it is an old open problem if the trivial inclusion $\textsf{ORD}^{-\emptyword}\subseteq \textsf{ORD}$ is strict or not.

\newcommand{\Modes}[1]{\ensuremath{\textit{Modes}_{\,\geq #1}}}

Let \Modes{i} denote $\{\leq k,=k,\geq k\mid k\in\mathbb N,k\geq i\}\cup\{*,t\}$. 
An \emph{ordered cooperating distributed grammar system}, or OCDGS for short,  is a construct $G=(N,\Sigma,(P_1,>_1),\dots,(P_n,>_n),S)$, where each component $G_i=(N,\Sigma,P_i,{>_i},S)$ is an ordered grammar.
Let $\Rightarrow_i^m$ refer to the derivation relation of $G_i$, with $m\in \Modes{1}$.
For $m=t$ (i.e., the $t$-mode of derivation), $u\Rightarrow^t_i v$ means $u\Rightarrow_i^*v$ and there is no $w$ such that $v\Rightarrow_i w$. The notations $\Rightarrow_i^{\leq k}$,  $\Rightarrow_i^{=k}$ and  $\Rightarrow_i^{\geq k}$ are referring to at most, exactly, and at least~$k$ derivation steps of $G_i$ (as usual). We say that a derivation step $u\Rightarrow_i^m v$ is \emph{unproductive} if $u= v$; if $u\neq v$, we also say that we have applied $G_i$ \emph{productively}. In the simulations that we will discuss in our paper, unproductive derivation steps can be safely ignored. 
If $G$ works in any of the modes~$m\in\Modes{1}$, 
then
the derivation relation $\Rightarrow_{G,m}$ is defined as $\bigcup_{i=1}^n\Rightarrow_i^m$.
Accordingly, $\cL(G,m)=\{w\in \Sigma^*\mid S\Rightarrow_{G,m}^*w\}$. 
When all orderings $>_i$ are empty, then an OCDGS is nothing else than a classical cooperating distributed grammar system, or CDGS for short.
We therefore arrive at the language families $\textsf{(O)CD}(m)$, depending on the mode of derivation. We will add $-\emptyword$ as a superscript to our notation when we disallow erasing productions.

\begin{example}\label{ex:a^2^n-OCDGS}
To showcase how OCDGS work, we will provide a short example. Consider the OCDGS~$G=(\{A,B,C\},\{a\}, (P_1, >_1), (P_2,>_2),(P_3,>_3), A)$ with the components $G_i=(\{A,B,C\},\{a\}, P_i, >_i, A)$ for $i\in [3]$. The language of $G$ in $t$-mode is $\cL(G,t)=\{a^{2^n} \mid n \geq 0 \}$.
$P_1=\{A \to B, A \to C\}$ with $>_1=\emptyset$, the second component~$G_2$ contains the ordered rules $C\to C > B\to AA$ and the third component~$G_3$ contains the ordered rules $B\to B > C\to a$. The unproductive rules $B \to B$ and $C \to C$ are used to block the components from deriving $C\to a$ and $B\to AA$ respectively when both $B$ and $C$ are present in the sentential form. A successful exemplary derivation of a word in $\cL(G,t)$ could look like this:
$$A \Rightarrow_{1}^t B \Rightarrow_{2}^t AA \Rightarrow_{1}^t BB \Rightarrow_{2}^t AAAA \Rightarrow_{1}^t CCCC \Rightarrow_{3}^t aaaa$$
When component~$G_1$ used both rules, i.e. when there are $B$ and $C$ in the sentential form, the derivation will not be successful:
$$A \Rightarrow_{1}^t B \Rightarrow_{2}^t AA \Rightarrow_{1}^t BC \Rightarrow_{2}^* BC$$
Since the rule $C\to C$ remains applicable after any number of derivation steps and $G$ works in $t$-mode, component~$G_2$ can never be left.
\qed
\end{example}




\noindent
For proving our next results, we need another concept from regulated rewriting.

A \emph{context-free graph-controlled grammar} is a construct 
\begin{equation*}
G_{C}=\left( N,\Sigma,\left( P,L_{\text{init}},L_{\text{fin}}\right) ,S\right) ;
\end{equation*}%
$N$ and $\Sigma$ are disjoint alphabets of nonterminal and terminal symbols,
respectively; $S\in N$ is the start symbol; $P$
is a finite set of rules $r$ of the form $\left( \ell\left( r\right) :p\left(
\ell\left( r\right) \right) ,\sigma \left( \ell\left( r\right) \right) ,\varphi
:\left( \ell\left( r\right) \right) \right) $, where $\ell\left( r\right) \in
\text{Lab}\left( G_{C}\right) $, $\text{Lab}\left( G_{C}\right) $ being a set of labels
associated to the rules in~$R$ by the injective labeling mapping~$\ell$, $p\left(
\ell\left( r\right) \right) $ is a context-free production over $\left( N\cup
\Sigma\right) ^{\ast }$, $\sigma \left( \ell\left( r\right) \right) \subseteq
\text{Lab}\left( G_{C}\right) $ is the \textit{success field} of the rule~$r$, and $\varphi \left( \ell\left( r\right) \right) $ is the \textit{failure field} of
the rule $r$; $L_{\text{init}}\subseteq \text{Lab}\left( G_{C}\right) $ is the set of
initial labels, and $L_{\text{fin}}\subseteq \text{Lab}\left( G_{C}\right) $ is the set of
final labels. For $\left( \ell(r):p\left( \ell\left( r\right) \right) ,\sigma
\left( \ell\left( r\right) \right) ,\varphi \left( \ell\left( r\right) \right)
\right) $ and $v,w\in \left( N\cup \Sigma\right) ^{\ast }$ we define $\left(
v,\ell\left( r\right) \right) \Longrightarrow _{G_{C}}\left( w,k\right) $ if
and only if

\begin{itemize}
\item \textbf{either} $p\left( \ell\left( r\right) \right) $ is applicable to $v $, the result of the application of the production $p(\ell(r))$ to $v$ is $w$, and $k\in \sigma \left( \ell\left( r\right) \right) $,

\item \textbf{or} $p\left( \ell\left( r\right) \right) $ is not applicable to $v $, $w=v$, and $k\in \varphi \left( \ell\left( r\right) \right) $.
\end{itemize}

\noindent The language generated by $G_{C}$ is
\begin{equation*}
\begin{array}{lll}
L\left( G_{C}\right){} = & \{w\in \Sigma^{\ast }\mid {}& \left. \left(
w_{0},\ell_{0}\right) \Longrightarrow _{G_{C}}\left( w_{1},\ell_{1}\right)
\Longrightarrow _{G_{C}}\cdots \left( w_{k},\ell_{k}\right) ,\ k\geq 1,\right.
\\ 
&  & \left. w_{j}\in \left( N\cup  \Sigma\right) ^{\ast }\ \mathrm{and}\ \ell_{j}\in
\text{Lab}\left( G_{C}\right) \ \mathrm{for}\ 0\leq j\leq k,\right. \\ 
&  & \left. w_{0}=S,\ w_{k}=w,\ \ell_{0}\in L_{\text{init}},\ \ell_{k}\in L_{\text{fin}}\right\} .%
\end{array}%
\end{equation*}%

Let $\textsf{GC}_{ac}$ and $\textsf{GC}_{ac}^{-\emptyword}$ denote the classes of languages that can be generated by graph-controlled grammars (with appearance checking) allowing and disallowing erasing rules, respectively. It is known that $\textsf{GC}_{ac}=\textsf{RE}$, and this characterization holds even for graph-controlled grammars with only two nonterminals, see~\cite{Feretal07}. Furthermore, the strict inclusion of $\textsf{GC}_{ac}^{-\emptyword}$ in the class of context-sensitive languages $\textsf{CS}$ is known, see~\cite{Ros69} in the related formalism of programmed grammars. 



\begin{theorem}\label{thm:GCac-characterization-by-OCD-all-modes}
For any mode $m\in\{\geq k, =k\mid k\geq 2\}
$, $\textsf{GC}_{ac}^{\,(-\emptyword)}=\textsf{OCD}(m)^{(-\emptyword)}$.\LV{\footnote{This notation (used in many places in the paper) is giving two claims in compact notation: $\textsf{GC}_{ac}^{\,-\emptyword}=\textsf{OCD}(m)^{-\emptyword}$ and $\textsf{GC}_{ac}=\textsf{OCD}(m)$. In other words, one has to either consistently disallow erasing rules (on both sides of the equation) or allow them.}}
\end{theorem}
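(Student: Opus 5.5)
The proof consists of two inclusions. For both directions I will represent graph-controlled grammars $G_C$ in the usual way: a control symbol $[p]$ for the current label $p$ is kept in the sentential form. Erasing rules are kept in the simulation exactly when they are present in the simulated device, so both the $-\emptyword$ and the unrestricted versions follow from the same constructions. The only exception is the final deletion of the control symbol in the direction $\textsf{GC}_{ac}\subseteq\textsf{OCD}(m)$. In the non-erasing case it must instead be merged into the last terminal-producing step, via rules of the form $[p]A\to$ terminal variants. I will handle this with the standard trick of letting the control information travel inside a nonterminal of the sentential form, i.e., as a composite symbol $[p,A]$.

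\emph{$\textsf{OCD}(m)^{(-\emptyword)}\subseteq\textsf{GC}_{ac}^{\,(-\emptyword)}$.} I will give a graph-controlled grammar with labels of the form $(i,j,r)$. Here $i$ is the active component, $j$ counts the steps already done in $\{0,\dots,k\}$, with the value $k$ meaning ``$\ge k$'' in the $\geq k$ mode, and $r$ is the rule that is about to be applied. Before $r=A\to y$ is applied, its ordering condition is verified by a chain of appearance-checking tests. For every $A'\to y'>_i r$ we use the rule $A'\to A'$: its success field leads to a trap label and its failure field continues the chain. The chain ends with $r$ itself, and after $r$ the counter is incremented. A component may be left only when $j=k$, either exactly, for the $=k$ mode, or at least, for the $\geq k$ mode. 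At that point control moves nondeterministically to $(i',0,\cdot)$. Final labels are those where a component has just been finished. All of this is finite-state bookkeeping, no erasing is introduced, and unproductive steps can be ignored.

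\emph{$\textsf{GC}_{ac}^{\,(-\emptyword)}\subseteq\textsf{OCD}(m)^{(-\emptyword)}$.} Each transition of $G_C$ becomes a component that performs exactly $k$ steps.

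A success transition ``apply $p\colon A\to w$, go to $q\in\sigma(p)$'' is realized by one component. It contains a counting chain $[p]\to[p,1]\to\cdots\to[p,k-2]$, the rule $A\to\bar w$, and a closing rule $[p,k-2]\to[q]$. The ordering places every chain rule above $A\to\bar w$, which forces the counter to advance before the real rewriting. Here $\bar w$ is $w$ in which one nonterminal carries a marker (or, if $w$ is terminal, $w$ is issued together with a marked version of a neighbouring symbol). The marker is erased by the closing step, which I will make depend on it via a composite symbol. This guarantees that $A\to w$ is used exactly once per simulated step. All control symbols of the component are distinct, and no symbol $[p]$ appears twice. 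Hence after these $k$ steps no rule of the component is applicable any more, which makes the same component correct also for $\geq k$. For $k=2$ the chain is empty and the component consists just of the marked rule and the closing rule.

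A failure transition ``$A$ absent, go to $q\in\varphi(p)$'' is realized by a component containing $A\to A>[p]\to\langle p\rangle_1\to\cdots\to[q]$. Here the chain has exactly $k$ steps, so the chain can run only if $A$ does not occur; if $A$ is present, the component cannot produce any step. Final labels get an extra component that removes the control symbol.

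The main obstacle is exactly the exact-count synchronisation in the second direction. We must guarantee that within one component application the ``real'' rule $A\to w$ fires exactly once, neither zero nor several times. We must also guarantee that mixing two transition components, or stopping a component early, cannot produce spurious words, and we must do this without erasing the marker in the non-erasing case. My first attempt is the marker-plus-ordering scheme sketched above. I will prove correctness by an invariant: every sentential form reachable from $S$ contains exactly one control/marker symbol, and that symbol encodes a legal configuration $(v,\ell)$ of $G_C$.
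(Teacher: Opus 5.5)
Your inclusion $\textsf{OCD}(m)^{(-\lambda)}\subseteq\textsf{GC}_{ac}^{(-\lambda)}$ is fine: it uses appearance-checking tests $A'\to A'$ for all larger rules and a step counter in the labels. The paper does not spell this direction out. The gap is in your success component for the converse.

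In an ordered grammar, a rule can only be made to depend on the \emph{absence} of symbols. You place $A\to\bar w$ below the component's own chain rules only, so it is not tied to the control state. It is applicable under any foreign control symbol $[r]$ with $r\neq p$. It is applicable again after the closing step has produced $[q]$ with $q\neq p$. For $k=2$ (empty chain) it is applicable whenever $A$ occurs at all. Nothing stops it from firing several times within one component application. Hence your claim that after $k$ steps no rule of the component is applicable is false, and spurious out-of-control applications of $A\to w$ arise in both the $=k$ and the $\geq k$ mode.

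The marker does not repair this as described:
\begin{itemize}
\item To block a second application, a rule with the marked symbol as left-hand side would have to sit above $A\to\bar w$.
\item The closing rule can depend on the marker only by rewriting it. That is an extra derivation step not in your count, unless marker and control symbol are one symbol, which you have not worked out.
\item For $w=\lambda$ or $w\in\Sigma^*$, nothing in $w$ can carry a marker. A context-free rule $A\to\bar w$ cannot mark a neighbouring symbol.
\end{itemize}

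The missing idea, which is the paper's, is to mark the occurrence on the left-hand side \emph{before} rewriting it, and to use the absence of the mark as the certificate.
\begin{itemize}
\item The first component has identity rules for all $\ell_j,\widehat\ell_j$ with $j\neq i$ and for all $\widehat A$ on top, then $\ell_i\to\widehat\ell_i>A_i\to\widehat A_i$. So exactly one occurrence gets marked, and only under control $\ell_i$.
\item The second component has identity rules for all $\ell_j$ and all $\widehat\ell_j$ with $j\neq i$ on top, then $\widehat A_i\to w_i>\widehat\ell_i\to\ell'$ for $\ell'\in\sigma_i$.
\end{itemize}
The control can therefore advance only after the unique $\widehat A_i$ has been rewritten. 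This works for arbitrary $w_i$, including $\lambda$ and terminal words. The same identity rules supply harmless unproductive padding, so no counting chains are needed, and one construction covers all modes $=k$ and $\geq k$ with $k\geq 2$.

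Your non-erasing plan also fails as stated. If the control travels inside a nonterminal, it has nowhere to go when that nonterminal is rewritten by a rule with terminal right-hand side while other nonterminals remain. Also, $G_C$ may still need failure steps after the last nonterminal is gone. The paper instead uses closure of $\textsf{GC}_{ac}^{-\lambda}$ under left derivatives, $L=\bigcup_{a\in\Sigma}\{a\}\delta_a(L)\cup F$, and replaces the only erasing rule $\ell_{\text{fin}}\to\lambda$ by $\ell_{\text{fin}}\to a$.
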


\begin{proof}

The following construction showing how to simulate graph-controlled grammars by OCDGS interestingly works for all modes $m \in \{=k, \geq k \mid k \geq 2\}$, as the reader can verify. 
To this end, let $G_{C}=\left( N_C,\Sigma,\left( P,L_{\text{init}},L_{\text{fin}}\right) , S_C\right)$ be some graph-controlled grammar with appearance checking, with $\text{Lab}(G_C)=\{\ell_i\mid i\in [\text{Lab}(G_C)]\}$. We are going to construct an equivalent  ordered CDGS $$G=(N,\Sigma,(P_0,>_0=\emptyset), (P_1,>_1), \dots, (P_{n},>_{n}),(P_{n+1},>_{n+1}) ,S)\,,$$ with $n=3|P|$, as follows.
Let $\widehat N_C=\{\widehat A\mid A\in N_C\}$, and similarly, $\widehat{\text{Lab}}(G_C)=\{\widehat\ell\mid \ell\in \text{Lab}(G_C)\}$. Let $N=N_C\cup\widehat N_C\cup \text{Lab}(G_C)\cup \widehat{\text{Lab}}(G_C)\cup\{S\}$.

The initialization is performed by the component $$P_0=\{S\to \ell_{\text{init}}S_C,\ell_{\text{init}}\to \ell_{\text{init}}\mid \ell_{\text{init}}\in L_{\text{init}}\}\,.$$
As this is the only occasion that we find a rule with left-hand side~$S$ in~$G$, we can only apply this component once in the beginning productively, and we also must apply this component in the beginning, resulting in the sentential form $\ell_{\text{init}}S_C$. As an invariant of our construction, we will maintain the presence of a symbol from $\text{Lab}(G_C)\cup \widehat{\text{Lab}}(G_C)$ prefixing any  sentential form $w\notin\Sigma^*$ with
$S\Rightarrow_{G,=2}^+w$, i.e., if $S\Rightarrow_{G,=2}^+w$, then $w\in\Sigma^*$ or $w\in (\text{Lab}(G_C)\cup \widehat{\text{Lab}}(G_C))\cdot (N_C\cup\widehat N_C\cup\Sigma)^*$.
In the following inductive argument, we discuss a sentential form $u=\ell v$ with $\ell\in (\text{Lab}(G_C)\cup \widehat{\text{Lab}}(G_C))$, $S\Rightarrow_G^+u$ and $(S_C,\ell_0)\Rightarrow_{G_C}^*(v,\ell)$ by induction hypothesis.

We terminate with the help of the component $$P_{n+1}:\{ X\to X\mid X\in N_C\cup \widehat N_C\}>_{n+1}\{\ell_{\text{fin}}\to\emptyword, \ell_{\text{fin}}\to\ell_{\text{fin}}\mid \ell_{\text{fin}}\in L_{\text{fin}}\}\,.$$
This component can only be applied productively to any sentential form~$u=\ell v$ if it does not contain any nonterminal from $N_C\cup\widehat N_C$. Even then, an application can be productive only if $\ell\in  L_{\text{fin}}$, in which case for $u\Rightarrow_{G,n+1}^{=2}$, the only productive generation yields $w=v\in \Sigma^*$. By definition of $\cL(G_C)$ and by our inductive hypothesis, also $w\in \cL(G_C)$ holds.

Now we describe how a rule $(\ell_i:A_i\to w_i,\sigma_i,\phi_i)$, $i\in [|\text{Lab}(G_C)|]$, is simulated. We can assume that $\ell_i\notin\phi_i$, as a second failure test is clearly unnecessary.
We define three simulating grammar components. As auxiliary set, define $\Lambda_{-i}=[|\text{Lab}(G_C)|]\setminus\{i\}$.
\begin{itemize}
    \item Failure case simulation by component $P_{k}$ for $k=3i$: $$\{\ell_j\to\ell_j\mid j\in\Lambda_{-i}\}>_{k}A_i\to A_i>_{k}\{\ell_i\to\ell'\mid \ell'\in \phi_i\}\,.$$
    In order to apply $P_k$ productively to $u=\ell v$, we must see $\ell=\ell_i$; also $A_i$ must not occur in~$v$. Hence, we will see $u=\ell v\Rightarrow_k\ell'v$ for some $\ell'\in \phi_i$. 
    As $\ell\neq\ell'\in\phi_i$, we are able to apply $P_k$ a second time, using $\ell'\to\ell'$. Hence, $u\Rightarrow_k^2\ell'v$, which faithfully simulates the derivation step $(v,\ell)\Rightarrow_{G_C}(v,\ell')$ of $G_C$ when applying the rule labeled $\ell_i$ in the absence of $A_i$ in~$v$.
    \item For the simulation of the success case, we use two components $P_k$ and $P_{k+1}$, with $k=3i-2$.\\
    $P_k:\{\ell_j\to\ell_j,\widehat\ell_j\to\widehat\ell_j\mid j\in \Lambda_{-i}\}\cup \{\widehat A\to\widehat A\mid A\in N_C\}$\\\phantom{XXXXXX}$>_k\ell_i\to\widehat\ell_i>_kA_i\to\widehat A_i$\,;\\
    $P_{k+1}: \{\widehat\ell_j\to\widehat\ell_j\mid j\in \Lambda_{-i}\}\cup\{\ell_j\to\ell_j\mid j\in [|\text{Lab}(G_C)|]\}$\\\phantom{XXXXXXX}$>_{k+1}\widehat A_i\to w_i>_{k+1}\{\widehat\ell_i\to \ell'\mid\ell'\in\sigma_i\}$\,.

    In order to apply $P_k$ productively on $u=\ell v$, we must see $\ell=\ell_i$, as for any other $\ell$, the first big set of rules in our ordering~$>_k$ will render the application unproductive. Hence, we observe $u\Rightarrow_{k}\widehat\ell_iv$. We have to make a second step with $P_k$. As $\widehat\ell_i$ is neither contained in the first big set of rules, the only possibility for a second rule application is to take $A_i\to\widehat A_i$; if $A_i$ does not occur in~$v$, the derivation will be stuck. If $A_i$ occurs in~$v$, one of its occurrences is turned into~$\widehat A_i$, which describes the resulting string~$v'$, i.e., $u\Rightarrow_{k}^{=2}\widehat\ell_iv'=u'$. Also notice that due to the required uniqueness of rule labelings, no other grammar component can derive productively on~$u$ but (possibly) $P_k$.

    Similarly, we can observe that the only grammar component that  can derive productively on~$u'$ is~$P_{k+1}$. On applying~$P_{k+1}$ once, we get the sentential form $\widehat\ell_iv''$, with $v''$ obtained from $v'$ by replacing the unique occurrence of $\widehat A_i$ in~$v'$ by~$w_i$.
    With similar arguments, we see that the only way to perform a second step of $P_{k+1}$ is to replace $\widehat\ell_i$ by some $\ell'\in\sigma_i$. Hence, $u'\Rightarrow_{k+1}^{=2}u''=\ell'v''$ is enforced. This clearly corresponds to $(v,\ell_i)\Rightarrow_{G_C}(v'',\ell_i)$ as a valid derivation step in~$G_C$, hence proving the inductive step, also for our invariant.
\end{itemize}
The claim that $\cL(G)\subseteq\cL(G_C)$ now follows by induction. Also the other inclusion direction can be read off from the previous argument and follows by another (much simpler) induction.

For the erasing case, since $\textsf{GC}_{\text{ac}}=\textsf{RE}$, the claim $\textsf{GC}_{ac}=\textsf{OCD}(=2)$ follows immediately.
The non-erasing case is therefore more interesting. In order to understand $\textsf{GC}_{ac}^{-\emptyword}\subseteq\textsf{OCD}(m)^{-\emptyword}$ completely, consder some $L\in \textsf{GC}_{ac}^{-\emptyword}$. As $\textsf{GC}_{ac}^{-\emptyword}$ is closed under left derivatives, we can decompose $L$ as $$L=\bigcup_{a\in\Sigma}\{a\}\delta_a(L)\cup F=\bigcup_{a\in\Sigma}\{a\}\{w\in \Sigma^+\mid aw\in L\}\cup F\,,$$ where $F\subseteq\Sigma$. Now, apply the construction given above to each graph-controlled grammar $G_{C,a}$ for $\delta_a(L)$ separately and replace the then only erasing rules of the form $\ell_{\text{fin}}\to\emptyword$ by $\ell_{\text{fin}}\to a$. Then, our argument gives a simulating ordered CDGS for  $\{a\}\delta_a(L)$. As $\textsf{OCD}(=2)$ can be easily seen to be closed under union and finite sets $F$ are trivially included in this class, the claim follows.\qed
\end{proof}

Notice that also in this simulation, the orderings that we create are of dimension two.
Hence, this also applies to the non-erasing case.
In the erasing case, we could further simplify the construction by replacing the two components that simulate success fields by the following single component:
$$\{\ell_j\to\ell_j\mid j\in \Lambda_{-i}\}>\ell_i\to\emptyword>\{A_i\to w_i\ell'\mid \ell'\in\sigma_i\}\,.$$

\begin{toappendix}
With the intuition that the modes from $\{\geq k, =k\mid k\geq 2\}$ are ``similar but slightly weaker than'' graph-controlled\LV{ (or equivalently, programmed or matrix)} grammars without appearance checking, the previous result can be seen as strengthening results from \cite{DasPau85} where it has been shown that these regulation mechanisms in combination with an ordering on the rules (enforcing to apply a ``largest rule'' from a success or failure field) are as powerful as appearance checking.
\end{toappendix}
\SV{With the intuition that the modes from $\{\geq k, =k\mid k\geq 2\}$ are ``similar but slightly weaker than'' graph-controlled\LV{ (or equivalently, programmed or matrix)} grammars without appearance checking, the previous result can be seen as strengthening results from \cite{DasPau85}.}

In an ordered grammar, a rule of the form $A\to A$ can be used to block certain rules when a nonterminal~$A$ is present by giving the blocked rule a lesser priority. This has already been utilized in Theorem~
\ref{thm:GCac-characterization-by-OCD-all-modes}.  
However, this does not seem possible for systems in $t$-mode, since a component can only be left when no more rules are applicable. This makes ordered grammar systems working in $t$-mode\LV{, in a way,} limited compared to ordered systems in other modes and leads us to\SV{:}\LV{ propose the following theorem.}

\begin{theorem}\label{thm:ordered-t}
    $\textsf{OCD}^{\,(-\emptyword)}(t)=\textsf{ORD}^{\,(-\emptyword)}$.
\end{theorem}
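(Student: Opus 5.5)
The equality splits into two inclusions, and I would prove them separately. In both, erasing and non-erasing are handled simultaneously, since neither construction adds or removes erasing rules.

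\emph{The inclusion $\textsf{ORD}^{(-\emptyword)}\subseteq\textsf{OCD}^{(-\emptyword)}(t)$.} Given an ordered grammar $G=(N,\Sigma,P,>,S)$, I take the OCDGS with the single component $(P,>)$. The key observation is that a terminal word admits no rule application at all. Hence every derivation $S\Rightarrow_G^* w$ with $w\in\Sigma^*$ is already a $t$-mode step $S\Rightarrow_1^t w$. Conversely, any sequence of $t$-steps is in particular an ordinary derivation of~$G$. So $\cL(G',t)=\cL(G)$.

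\emph{The inclusion $\textsf{OCD}^{(-\emptyword)}(t)\subseteq\textsf{ORD}^{(-\emptyword)}$.} Here I would first record a simple lemma. In an ordered grammar $(N,\Sigma,P_i,>_i)$, \emph{no rule is applicable} to $u$ if and only if $u$ contains no symbol from $\mathrm{dom}(P_i)$, the set of left-hand sides. Indeed, if some left-hand side occurs in~$u$, pick a $>_i$-maximal rule among those whose left-hand side occurs in~$u$; this rule is applicable. So the $t$-condition is a pure forbidding condition. This is exactly why ordered grammars can express it, in contrast to the $A\to A$ blocking trick used in Theorem~\ref{thm:GCac-characterization-by-OCD-all-modes}, which fails under $t$-mode.

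Given $G=(N,\Sigma,(P_1,>_1),\dots,(P_n,>_n),S)$, I build an ordered grammar $G'$ with the following nonterminals:
\begin{itemize}
\item a fresh start symbol $S'$;
\item copies $A^{(i)}$ for $A\in N$, $i\in[n]$, where $A^{(i)}$ means ``component~$i$ is active'';
\item transit copies $A^{(i,j)}$ for $i\neq j$.
\end{itemize}
The rules are as follows.
\begin{itemize}
\item Start rules $S'\to S^{(i)}$.
\item For each rule $A\to y$ in $P_i$, a rule $A^{(i)}\to y^{(i)}$, where $y^{(i)}$ tags every nonterminal of $y$ with~$(i)$. These rules are ordered as $>_i$ prescribes; since copies for distinct $i$ are disjoint, each $P_i$ keeps its own semantics.
\item Switch rules $A^{(i)}\to A^{(i,j)}$. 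They are made smaller than the blockers $X^{(i)}\to X^{(i)}$ for all $X\in\mathrm{dom}(P_i)$, so a switch is possible only when $P_i$ is $t$-terminated. They are also made smaller than the blockers $Y^{(i,j')}\to Y^{(i,j')}$ for $j'\neq j$, so a started switch is committed to one target~$j$.
\item Completion rules $A^{(i,j)}\to A^{(j)}$. They are made smaller than the blockers $B^{(i)}\to B^{(i)}$ for all $B$, so completion only starts after all $(i)$-symbols have been converted.
\item Finally, all simulation rules of every $P_j$ are made smaller than the blockers $Z^{(i,j')}\to Z^{(i,j')}$. Thus no component works while a transit is in progress.
\end{itemize}
Then a derivation of $G'$ decomposes into phases. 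Each phase is a $\Rightarrow_i^*$-derivation on $(i)$-tagged words, ended only when $\mathrm{dom}(P_i)$ is absent, which is exactly $\Rightarrow_i^t$ by the lemma; it is followed by a full renaming to $(j)$-tags. Terminal words coincide, since terminals are not tagged. Because only renaming rules are added, no erasing rules are introduced, so the non-erasing case follows as well.

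The step I expect to be the main obstacle is checking that the union of all these order relations, after transitive closure, is still a \emph{strict} order, i.e., acyclic. I would handle it by layering: all identity blocker rules $X\to X$ form a top layer, and all non-identity rules are placed below. Every pair I introduced has the form blocker $>$ non-identity rule, except the original $>_i$ pairs. Those live within the disjoint $(i)$-rule sets, where the blockers $X^{(i)}\to X^{(i)}$ are new rule objects distinct from any $A\to A$ rules of $P_i$; if $P_i$ itself contains $A\to A$, I keep it as a separate rule, or merge carefully and verify that the merged rule has no incoming order from a lower layer. With this layering no cycle can arise, and the phase-decomposition induction then finishes both directions.
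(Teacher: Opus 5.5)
Your inclusion $\textsf{ORD}^{(-\emptyword)}\subseteq\textsf{OCD}^{(-\emptyword)}(t)$ is correct. So is your lemma that $t$-termination of an ordered component just means that no symbol of $\mathrm{dom}(P_i)$ occurs. Your overall architecture (component tags, transit tags, identity blockers) is the paper's; the paper gets the same guard on switch rules by placing them below all simulated rules $p_i$.

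The gap is that the order relations you list do not enforce the phase decomposition you claim. They block too little:
\begin{itemize}
\item a switch $A^{(j)}\to A^{(j,k)}$ is blocked only by $(j)$-copies of $\mathrm{dom}(P_j)$ and by $(j,k')$-symbols;
\item a completion $A^{(j,k)}\to A^{(k)}$ is blocked only by $(j)$-symbols;
\item simulation rules are blocked only by transit symbols.
\end{itemize}
So while $(i,j)$-symbols are still pending, an already completed $(j)$-symbol can leave $j$ and be completed into $(k)$, provided no $(j)$-copy of $\mathrm{dom}(P_j)$ is present. Once that wave is finished, your commit guard no longer applies, and the next wave can leave $j$ towards some $k'\neq k$. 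You end up with a word carrying plain tags $(k)$ and $(k')$ and no transit symbols. On it, $P_k$ and $P_{k'}$ run interleaved, each checking its order and its $t$-termination only on its own tagged symbols.

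Concretely, let $P_1=\{S\to AB\}$, $P_2=\{C\to c\}$, $P_3=\{A\to a,B\to F\}$, $P_4=\{B\to b,A\to F\}$, with empty orders and no rule for $F$. Then $\cL(G,t)=\emptyset$: after $S\Rightarrow_1^t AB$, every productive $t$-step must rewrite both $A$ and $B$ and thereby introduces $F$. Your grammar, however, admits the following derivation, and every step respects your orders:
\begin{align*}
S'&\Rightarrow S^{(1)}\Rightarrow A^{(1)}B^{(1)}\Rightarrow A^{(1,2)}B^{(1)}\Rightarrow A^{(1,2)}B^{(1,2)}\Rightarrow A^{(2)}B^{(1,2)}\\
&\Rightarrow A^{(2,3)}B^{(1,2)}\Rightarrow A^{(3)}B^{(1,2)}\Rightarrow A^{(3)}B^{(2)}\Rightarrow A^{(3)}B^{(2,4)}\\
&\Rightarrow A^{(3)}B^{(4)}\Rightarrow aB^{(4)}\Rightarrow ab\,.
\end{align*}

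The paper blocks exactly this in two ways. Its switch rules $Y_i\to Y_{i\to j}$ lie below $X_{l\to k}\to X_{l\to k}$ for all $l\neq i$, $k\neq j$, so nothing leaves $j$ while $(i\to j)$-symbols are pending. Its simulated rules $p_i$, and hence also the switch rules, lie below $X_k\to X_k$ for all $k\neq i$.

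In your notation it suffices to make every switch rule $A^{(i)}\to A^{(i,j)}$ smaller than the blockers of \emph{all} transit symbols $X^{(i',j')}$ with $(i',j')\neq(i,j)$. Then the set of tags occurring in a sentential form is always one of $\{(i)\}$, $\{(i),(i,j)\}$, $\{(i,j)\}$, $\{(i,j),(j)\}$. With that invariant, your phase induction and your layering argument for acyclicity go through.

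Keep your guard of the simulation rules by the transit blockers as well. Without it, $P_j$ could act on already completed symbols while left-hand sides of bigger rules are still hidden in transit form. The paper's listed order relations do not contain this guard explicitly; it appears only in the prose.
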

\begin{proof}
    Since an ordered grammar is equivalent to an ordered grammar system with one component, it is sufficient to show that an ordered grammar system with $n\geq 2$ components can be simulated by an adequate ordered grammar. \\
    Let $G=(N,\Sigma, (P_1,>_1), (P_2,>_2),\dots, (P_n,>_n), S)$ be an OCDGS with components $G_i=(N,\Sigma, P_i,>_i,S)$ for $i \in [n]$.\\
    Construct the ordered grammar $\Gamma=(N',\Sigma, P, >, S)$ with $$N'= \{S\} \cup \bigcup_{i\in[n],j \in [n],A\in N} \{A_i, A_{i\to j}\}\,.$$ 
    Define the homomorphisms $h_i:(N\cup\Sigma)^*\to (N'\cup\Sigma)^* $ by $a\mapsto a$ for $a\in \Sigma$ and $A\mapsto A_i$ for $A\in N$.
    Construct $P$ along with $>$ as follows:
    \begin{itemize} 
        \item Add rules $\{S \to S_i \mid i \in [n]\}$. \\
        Here, the first component to use is chosen. This is denoted by marking the starting symbol with~$i$ for component~$G_i$. These rules are the only rules with the starting symbol~$S$ and an unmarked nonterminal on the left-hand side.
        \item For each $P_i$, $i \in [n]$, add the (marked) rule $p_i = A_i \to h_i(w)$ for all $p= A\to w \in P_i$. 
        The order from $P_i$ is maintained with $p_i > q_i$ if $p >_i q$ for all $p,q \in P_i$.\\
        \item For each $P_i$ and for all $p\in P_i$, add the rules 
        $$\{X_k \to X_k 
        \mid k \in [n]-i 
        , X \in N \} > p_i > \{Y_i \to Y_{i\to j} \mid Y \in N, j \in [n]-i \} $$
         and for $j\in [n]-i$
        $$\{X_{l \to k} \to X_{l \to k} \mid X \in N,  l \in[n]-i, k \in [n]-j \}
        >\{Y_i \to Y_{i\to j} \mid Y \in N \}\,.$$
        As long as a nonterminal with marker $k\neq i$ is present in the sentential form, no rules from $P_i$ can be applied. When no more rules from $P_i$ are applicable, the next component~$G_j$ will be guessed and the marker of all nonterminals is updated to be the transition marker of the form $i\to j$. Note that all present nonterminals in the sentential form are marked. There cannot be two different transition markers in the sentential form. For an illustration of the order relation using an example \LV{of an OCDGS }with two components, \LV{refer to Figure}\SV{see Fig.}~\ref{fig:ocd2-to-ord}.

        \item For each $i,j \in [n]$ with $i\neq j$ add rules 
         \begin{align*}
        &\{X_\ell \to X_\ell,  X_{k \to \ell} \to X_{k \to \ell}  \mid X \in N, k \in [n],\ell \in [n]-j 
        \}\\&>
        \{ Y_{i\to j} \to Y_{j} \mid Y\in N\}\,. \end{align*}
         When all nonterminals have the same transition marker or  a marker for the next component, all nonterminals in the sentential form will be re-marked accordingly. To ensure that only the right transition is executed, the only rules here where symbols $X_{i\to j}$ are not blocked are $Y_{i\to j} \to Y_j$. 
         When all transition markers have been replaced, the derivation with rules from $P_j$ starts.   
    \end{itemize}
With these explanations at hand, a formal inductive proof for the correctness of the construction is straightforward but tedious and hence omitted.\qed 
\end{proof}
\begin{figure}[tb]
		\xymatrix@R=1em@C=2em{
			& \{X_{1\to2} \to X_{1\to2}\}_{X\in N} &  &  \\
			\{Y_{2\to1} \to Y_{1}\}_{Y\in N}  \ar[ru] \ar[dr] & \{X_{1} \to X_{1}\}_{X\in N}  & \ar[l] (P_2,>_2) & \ar[l] \ar[llu] \{Y_{2} \to Y_{2\to1}\}_{Y\in N}   \\
			\{Y_{1\to2} \to Y_{2}\}_{Y\in N}   \ar[ru] \ar[dr] & \{X_{2} \to X_{2}\}_{X\in N} & \ar[l] (P_1,>_1) &\ar[l]  \ar[lld] \{Y_{1} \to Y_{1\to2}\}_{Y\in N}  \\
			&\{X_{2\to1} \to X_{2\to1}\}_{X\in N}  \\
        }
    \caption{This graph shows the rules and order relations in a construction for two components. $p_1 \leftarrow p_2$ denotes $p_1 > p_2$. $(P_i,>_i)$ denotes all rules from the original component $G_i$ with its ordering. Note\LV{ that}\SV{:} there can be uncomparable rules within\LV{ that set}\SV{~$P_i$}.}\label{fig:ocd2-to-ord}    
	\end{figure}
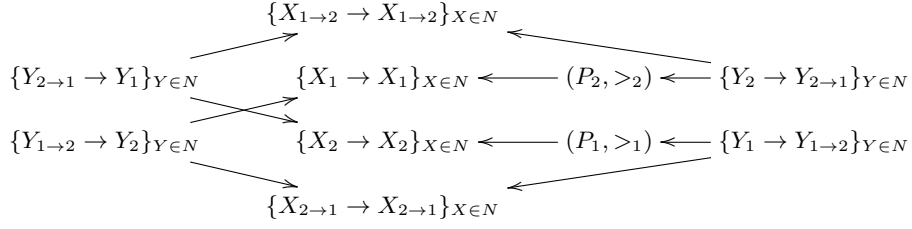

\section{Grammar Systems with Random Context Rules}
\label{sec:RCCDGS}

A \emph{random context cooperating distributed grammar system}, or RCCDGS for short,  is a construct $G=(N,\Sigma,P_1,\dots,P_n,S)$, where each component $G_i=(N,\Sigma,P_i,S)$ is a random-context grammar  (introduced in~\cite{Wal72}).
Each $P_i$ is a set of rules of the form $(A\to y,E,F)$, where $A\to y$ is a context-free rule, i.e., $A\in N$ and $y\in (N\cup\Sigma)^*$, $E,F\subseteq N$ are disjoint sets of permitting and forbidden symbols, respectively.
The derivation relation for each component is defined as follows.
For $u,v\in (N\cup\Sigma)^*$, $u\Rightarrow v$ if 
\begin{itemize}
    \item $\exists (A\to y\in R\, \exists x,z\in (N\cup\Sigma)^*: u=xAz, v=xyz$ and
    \item $(\forall A'\in P |u|_{A'}>0)\land(\forall A'\in F |u|_{A'}=0)$.
\end{itemize}
While the first condition is just the classical rewrite condition, the second one explains the meaning of random context. \SV{The derivation relation is defined as for OCDGS.}
\LV{
Let $\Rightarrow_i$ refer to the derivation relation of $G_i$. 
For $m=t$ (i.e., the $t$-mode of derivation), $u\Rightarrow^t_i v$ means $u\Rightarrow_i^*v$ and there is no $w$ such that $v\Rightarrow_i w$. The notations $\Rightarrow_i^{\leq k}$,  $\Rightarrow_i^{=k}$ and  $\Rightarrow_i^{\geq k}$ are referring to at most, exactly, and at least $k$ derivation steps of $G_i$ (as usual). We say that a derivation step $u\Rightarrow_i^m v$ is \emph{unproductive} if $u= v$; if $u\neq v$, we also say that we have applied $G_i$ \emph{productively}. In the simulations that we discuss in our paper, unproductive derivation steps can be safely ignored. 
If $G$ works in any of the modes~$m\in\Modes{1}$, then
the derivation relation $\Rightarrow_{G,m}$ is defined as $\bigcup_{i=1}^n\Rightarrow_i^m$.
Accordingly, $\cL(G,m)=\{w\in \Sigma^*\mid S\Rightarrow_{G,m}^*w\}$. }We can now define language families like $\textsf{RCCD}^{(-\emptyword)}(m)$, depending on the mode~$m$ of derivation. 
Also, we will consider grammar systems with all random context rules 
having empty permitting contexts, leading to $\textsf{fRCCD}^{(-\emptyword)}(m)$. When writing 
forbidden RC rules, it is sufficient to attach one set of nonterminals to the context-free rule. 
%
Masopust proved in \cite[Thm.~8]{Mas2009} that $\textsf{fRCCD}^{(-\emptyword)}(t)=\textsf{GC}_{\text{ac}}^{(-\emptyword)}$, so that in particular $\textsf{fRCCD}(t)=\textsf{RE}$ follows. No prior study has been done concerning the other modes to the best of our knowledge.

By just taking the well-known simulation between forbidden random context and ordered grammars, refer to~\cite{CreMay73}, one sees that these two mechanisms are indeed of the same power also when seen as grammar components. \LV{We therefore only provide a proof sketch.}\SV{Thus:}

\begin{theorem}\label{thm:forbiddenCD=orderedCD}
For all modes~$m\in\{=k,\leq k,\geq k\mid k\geq 1\}\linebreak[3]\cup\{*\}$, $$\textsf{fRCCD}^{\,(-\emptyword)}(m)=\textsf{OCD}^{\,(-\emptyword)}(m).$$
\end{theorem}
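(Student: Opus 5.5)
We prove both inclusions by translating each grammar component on its own, leaving the set of components unchanged. It then suffices to show that every component $G_i$ of one type can be replaced by a component $G_i'$ of the other type whose one-step relation on sentential forms over $N\cup\Sigma$ is exactly the same: $u\Rightarrow_{G_i}v$ iff $u\Rightarrow_{G_i'}v$. All the modes $=k,\leq k,\geq k,*$ are defined purely from this one-step relation by counting steps, so equal one-step relations give equal relations $\Rightarrow_i^m$. Hence they give equal $\Rightarrow_{G,m}$ and equal languages. This is also why $t$ is excluded: there the equivalence would additionally have to preserve the condition ``no rule applicable'', and the extra rules $A\to A$ below would destroy it.

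\emph{From ordered to forbidden.} Take a component $(P_i,>_i)$. For each rule $p=A\to y\in P_i$, let $F_p=\{A'\mid \exists (A'\to y')\in P_i: (A'\to y')>_i p\}$. Replace $p$ by the forbidden random context rule $(A\to y,F_p)$. The rule $p$ may be applied to $u$ iff $u$ contains $A$ and no left-hand side of a larger rule, i.e.\ iff $u$ contains no symbol of $F_p$, which is exactly the forbidden condition. One subtlety: the paper requires permitting and forbidden sets to be disjoint, but here only a forbidden set is present, so nothing needs checking. If $A\in F_p$, the rule is simply never applicable, in both models. No rules are introduced, so erasing rules neither appear nor disappear, which settles the $-\emptyword$ variant as well.

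\emph{From forbidden to ordered.} This direction follows the classical idea of~\cite{CreMay73}. For a component with rules $(A\to y,F)$, we build an ordered component containing every underlying rule $A\to y$, together with a blocking rule $B\to B$ for each nonterminal $B$ that occurs in some forbidden set. We then set $(B\to B)>(A\to y)$ whenever $B\in F$. This relation is a strict order, because no two blocking rules are ever compared, so the ordering has depth two. Now $A\to y$ is applicable iff $A$ occurs in $u$ and no $B\in F$ occurs, which is exactly the original applicability.

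The blocking rules $B\to B$ only ever produce unproductive steps $u\Rightarrow u$. This is the main point to check. In mode $*$ and $\leq k$, unproductive steps are harmless, since zero steps are allowed anyway. In modes $=k$ and $\geq k$, however, a component that could originally not perform $k$ steps might now pad its derivation with $B\to B$ steps. We must argue that padding never changes the set of strings reachable by a component application, as far as the generated language is concerned.

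Two approaches to this step are possible. The first would be to use a padding trick in the other direction as well, adding identity rules to the forbidden-context side. That side is unrestricted, though, so this gives nothing. The real task is to show that each padded run is, in terms of the resulting sentential form, equivalent to a genuine run of the original component, or to remove padding entirely. The plan is the following: instead of the rule $B\to B$, split each rule $(A\to y,F)$ by introducing a blocking rule only in primed form. In each component we duplicate every nonterminal $B$ into a copy used solely for blocking, and make the blocking rules of the form $B\to B'$ with $B'$ a trap symbol that can never be removed. Any use of a blocking rule then kills the derivation. Blocking still works, because the ordering cares only about left-hand sides. So the blocking rules remain present purely as order witnesses and contribute no successful steps, and the one-step relations agree on all derivations that can still lead to terminal words. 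Because the trap rules are not erasing, the non-erasing case is preserved too.
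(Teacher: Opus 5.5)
Your final construction is correct and is essentially the paper's. In one direction you take the left-hand sides of larger rules as forbidden sets; in the other you give each $X\in F$ a larger rule into a never-rewritable failure symbol (the paper uses a single $X_f$ instead of your per-nonterminal traps $B'$), and your observation that plain $B\to B$ blockers would permit padding in the $=k$ and $\geq k$ modes, while trap steps only yield dead sentential forms, is exactly why that failure symbol is needed.

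One detail that both you and the paper's sketch pass over: two rules $(A\to y,F_1)$ and $(A\to y,F_2)$ with the same core in one component collapse into a single ordered rule blocked by $F_1\cup F_2$ instead of being applicable when $F_1$ or $F_2$ is absent, so an extra normal-form step (e.g., distinguishing the cores via copies of nonterminals) is needed first.
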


\begin{toappendix}
\begin{proof}[Sketch\SV{ for Theorem~\ref{thm:forbiddenCD=orderedCD}}]
Attach to a rule of an ordered CDGS the collection of left-hand sides of bigger rules (in that component) as its forbidden context. Conversely, let $X_f$ be a new symbol (failure symbol), 
and add in the component of the forbidden random context rule $(A\to w,F)$ bigger rules $X\to X_f$ for each $X\in F$, so that then $A\to w$ needs no forbidden context any longer. In both directions, it should be clear that a component in the simulating grammar can make $\ell$ steps if and only if the corresponding component in the simulated can perform $\ell$ steps.\qed
\end{proof}
\end{toappendix}


\begin{theorem}\label{thm:frccd1=frc}
For~$m\in\{\leq k\mid k\geq 1\}\linebreak[3]\cup\{*,=1,\geq1\}$, $\textsf{fRCCD}^{\,(-\emptyword)}(m)=\textsf{fRC}^{\,(-\emptyword)}$.
\end{theorem}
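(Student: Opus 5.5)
The plan is to prove both inclusions directly. The guiding observation is that in all the listed modes $m$, one step of the system is, up to unproductive steps, a finite sequence of single rule applications, each of which obeys only its own forbidden context. The modes $*$, $\leq k$ and $\geq 1$ allow a component to do any number of steps (at least one, at most $k$, or arbitrarily many). Mode $=1$ means exactly one step.

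\emph{Inclusion $\textsf{fRC}^{(-\emptyword)}\subseteq\textsf{fRCCD}^{(-\emptyword)}(m)$.} Take a forbidden random context grammar $G=(N,\Sigma,P,S)$ and form the grammar system with one component per rule, $P_r=\{r\}$ for each $r\in P$. (Taking the single component $P$ would also do.)
\begin{itemize}
\item In mode $=1$, $\geq 1$, $\leq k$ or $*$, a component can always perform exactly one step.
\item Any derivation of the system decomposes into single rule applications of~$G$, because the forbidden-context test is carried out at every individual step.
\end{itemize}
Hence both systems generate exactly $\cL(G)$. Erasing rules are neither introduced nor removed, so the $-\emptyword$ variant follows as well.

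\emph{Inclusion $\textsf{fRCCD}^{(-\emptyword)}(m)\subseteq\textsf{fRC}^{(-\emptyword)}$.} Let $G=(N,\Sigma,P_1,\dots,P_n,S)$ work in mode~$m$. The key claim is $\Rightarrow_{G,m}^*=\bigl(\bigcup_i\Rightarrow_i\bigr)^*$.
\begin{itemize}
\item The inclusion $\subseteq$ is immediate.
\item For $\supseteq$, a single step $u\Rightarrow_i v$ is itself a legal $m$-step of component~$i$ in each of the modes $=1$, $\geq 1$, $\leq k$ and $*$.
\end{itemize}
Consequently $\cL(G,m)$ is generated by the forbidden random context grammar $(N,\Sigma,\bigcup_iP_i,S)$. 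One subtlety is that identical context-free rules occurring in different components may carry different forbidden sets. This is harmless, because rules are triples and the union simply keeps both variants, each with its own test. Again no rule becomes erasing, so the non-erasing classes correspond as well.

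The only real obstacle is checking the mode definitions carefully, so that a multi-step $m$-derivation really is nothing more than a concatenation of individually valid steps. This holds because the contexts are re-evaluated on each intermediate sentential form, in the same way as in the single grammar. The restriction on modes is essential: $=k$ and $\geq k$ with $k\ge2$ (and $t$) impose a counting or maximality constraint, which is why they behave differently, as in Theorem~\ref{thm:GCac-characterization-by-OCD-all-modes}. Via Theorem~\ref{thm:forbiddenCD=orderedCD}, the statement transfers to $\textsf{OCD}(m)$ and $\textsf{ORD}$ as well.
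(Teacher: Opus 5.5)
Your proposal is correct and follows essentially the same approach as the paper. The paper's sketch also merges all rules into a single component, because in these modes a component may always stop after one step and any longer run is just a concatenation of individually valid forbidden-context rule applications. You spell out both inclusions more explicitly than the paper does, but the idea is the same.
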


\begin{proof}[Sketch]
As any component can be executed also only one time, we can simply put all rules in a single component.\qed    
\end{proof}

In view of known hierarchy results concerning forbidden random context (or ordered) grammars (as stated, e.g., in \cite{Fer96a}), we can now immediately derive:

\begin{corollary}\label{cor:frccd1<frccd2}
    $\textsf{fRCCD}^{\,(-\emptyword)}(=1)\subsetneq \textsf{fRCCD}^{\,(-\emptyword)}(=2)$ and
    $\textsf{fRCCD}^{\,(-\emptyword)}(\geq 1)\subsetneq \textsf{fRCCD}^{\,(-\emptyword)}(\geq 2)$.
\end{corollary}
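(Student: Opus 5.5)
We obtain both strict inclusions by chaining results already established in the paper with known facts about forbidden random context grammars. For the left-hand sides, Theorem~\ref{thm:frccd1=frc} gives $\textsf{fRCCD}^{\,(-\emptyword)}(=1)=\textsf{fRC}^{\,(-\emptyword)}$ and $\textsf{fRCCD}^{\,(-\emptyword)}(\geq 1)=\textsf{fRC}^{\,(-\emptyword)}$. For the right-hand sides, Theorem~\ref{thm:forbiddenCD=orderedCD} with $m\in\{=2,\geq 2\}$ gives $\textsf{fRCCD}^{\,(-\emptyword)}(m)=\textsf{OCD}^{\,(-\emptyword)}(m)$. Theorem~\ref{thm:GCac-characterization-by-OCD-all-modes} then identifies this class with $\textsf{GC}_{ac}^{\,(-\emptyword)}$. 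In both the erasing and non-erasing variants, the claim therefore reduces to showing $\textsf{fRC}^{\,(-\emptyword)}\subsetneq\textsf{GC}_{ac}^{\,(-\emptyword)}$.

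The inclusion is classical. A forbidden random context rule $(A\to w,F)$ is simulated in a graph-controlled grammar by a loop of appearance-checking tests, one for each $X\in F$. Each test is a rule $X\to X_f$ (or an equivalent blocking rule) whose success field leads to a dead end and whose failure field leads on to the next test. After all tests, the grammar applies $A\to w$ and returns to a central hub of labels. This construction introduces no erasing rules, so it also yields $\textsf{fRC}^{-\emptyword}\subseteq\textsf{GC}_{ac}^{-\emptyword}$. Alternatively, the inclusion follows directly from the regulated-rewriting hierarchy as surveyed in \cite{Fer96a}, via $\textsf{fRC}^{(-\emptyword)}=\textsf{ORD}^{(-\emptyword)}\subseteq\textsf{GC}_{ac}^{(-\emptyword)}$.

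Strictness is the main obstacle, and I would settle it by citing known separations rather than reproving them. In the erasing case, $\textsf{GC}_{ac}=\textsf{RE}$ by \cite{Feretal07}, while $\textsf{fRC}=\textsf{ORD}$ is known to be strictly contained in $\textsf{RE}$; in fact it is contained in the class of ET0L languages (see \cite{Fer96a}), and is thus a subclass of $\textsf{CS}$. In the non-erasing case, $\textsf{fRC}^{-\emptyword}\subseteq\textsf{ORD}\subseteq\textsf{ET0L}$ as well, whereas $\textsf{GC}_{ac}^{-\emptyword}$ contains non-ET0L languages. The standard witness is a language of the form $\{\,b(a b^{m})^{n}\mid m\ge n\ge 1\}$-type, or $\{a^{n}b^{m}\mid \dots\}$ with multiplicative constraints, known from Rozenberg's examples of programmed languages outside ET0L. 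If needed, the separation can also be taken verbatim from the inclusion diagrams of \cite{Fer96a}, which record $\textsf{ORD}^{(-\emptyword)}\subsetneq\textsf{GC}_{ac}^{(-\emptyword)}$. Combining the equalities from the first paragraph with this strict inclusion yields both claims.
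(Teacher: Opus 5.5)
Your reduction is the same as the paper's. Theorem~\ref{thm:frccd1=frc} turns both left-hand sides into $\textsf{fRC}^{(-\emptyword)}$. Theorems~\ref{thm:forbiddenCD=orderedCD} and~\ref{thm:GCac-characterization-by-OCD-all-modes} turn both right-hand sides into $\textsf{GC}_{ac}^{(-\emptyword)}$. The paper then invokes the known strict inclusion $\textsf{fRC}^{(-\emptyword)}=\textsf{ORD}^{(-\emptyword)}\subsetneq\textsf{GC}_{ac}^{(-\emptyword)}$ from the regulated-rewriting literature (\cite{Fer96a}).

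Your main justification of strictness, however, rests on a false claim: $\textsf{ORD}$ (or $\textsf{fRC}^{-\emptyword}$) is \emph{not} contained in the ET0L languages. The inclusion goes the other way, $\textsf{ET0L}\subsetneq\textsf{ORD}^{-\emptyword}$ \cite{Pen75}. The paper records this itself, in its remark after Theorem~\ref{thm:frc-CD} and in Figure~\ref{fig:survey-of-results}. This breaks both of your cases. In the non-erasing case, exhibiting a language of $\textsf{GC}_{ac}^{-\emptyword}$ that is not ET0L (your Rozenberg-type witness) says nothing about whether it lies in $\textsf{ORD}^{-\emptyword}$. In the erasing case, your route to $\textsf{ORD}\subseteq\textsf{CS}$, and hence $\textsf{ORD}\subsetneq\textsf{RE}$, uses the same false premise, so it gives no separation either.

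The only part of your strictness argument that stands is the fallback sentence: take $\textsf{ORD}^{(-\emptyword)}\subsetneq\textsf{GC}_{ac}^{(-\emptyword)}$ from the literature. That separation is proved with techniques specific to ordered and forbidding grammars, not ET0L non-membership arguments. Make that citation the actual argument and delete the ET0L reasoning; your proof then coincides with the paper's. If you want a self-contained separating witness instead, you must show directly that it lies outside $\textsf{ORD}$, which is considerably harder than showing it is not ET0L.
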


With the usual ``prolongation technique'' (formally applied below), one can show that  
$\textsf{fRCCD}^{\,(-\emptyword)}(\geq k)\subseteq \textsf{fRCCD}^{\,(-\emptyword)}(\geq k+1)$ for any $k$. Recall that such an inclusion chain for the $=k$-modes is unknown until today when considering classical CDGS. It is also unknown if these (possible) hierarchies collaps to a certain level.
In view of what we said, the following result is somehow surprising.

\begin{theorem} \label{thm:mode-hierarchy-collaps}
Let $k\geq 2$. Then,
$$\textsf{fRCCD}^{\,(-\emptyword)}(=2)=\textsf{fRCCD}^{\,(-\emptyword)}(=k)=\textsf{fRCCD}^{\,(-\emptyword)}(\geq k)\,.$$
\end{theorem}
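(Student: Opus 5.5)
Combining Theorem~\ref{thm:forbiddenCD=orderedCD} with Theorem~\ref{thm:GCac-characterization-by-OCD-all-modes} gives, for every $k\geq 2$,
$$\textsf{fRCCD}^{\,(-\emptyword)}(=k)=\textsf{OCD}^{\,(-\emptyword)}(=k)=\textsf{GC}_{ac}^{\,(-\emptyword)}=\textsf{OCD}^{\,(-\emptyword)}(\geq k)=\textsf{fRCCD}^{\,(-\emptyword)}(\geq k).$$
Both theorems cover exactly the modes $=k$ and $\geq k$ with $k\geq 2$, so all the classes in the statement coincide with $\textsf{GC}_{ac}^{\,(-\emptyword)}$. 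The proof therefore reduces to a chain of already established equalities, read consistently in the erasing and in the non-erasing version.

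There is one point I would check, since it is where the argument could break. The inclusion $\textsf{GC}_{ac}\subseteq\textsf{OCD}(m)$ was carried out in detail only for $=2$, with the remark that it ``works for all modes''. For $=k$ with $k>2$, each simulating component must be able to make exactly $k$ productive steps in a controlled way. In the failure component this is possible because $\ell'\to\ell'$ may be applied repeatedly. In the success components $P_{3i-2}$ and $P_{3i-1}$, after the two meaningful steps the label $\widehat\ell_i$, respectively $\ell'$, still admits self-loops: in $P_{k+1}$ the rule $\ell_j\to\ell_j$ covers every $j$. For $P_k$, the symbol $\widehat A$ can be looped via $\widehat A\to\widehat A$ once it is present. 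I would verify these padding steps, or, if necessary, add harmless loop rules $\widehat\ell_i\to\widehat\ell_i$ of lowest priority. Such loops leave the sentential form unchanged and so do not alter the simulation; for $\geq k$ the padding is the same.

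Alternatively, to avoid leaning on the authors' ``as the reader can verify'', I would first prove $\textsf{fRCCD}(=k)\subseteq\textsf{GC}_{ac}$ and $\textsf{fRCCD}(\geq k)\subseteq\textsf{GC}_{ac}$ directly. A graph-controlled grammar can count the steps of a component in its finite control, and it can test forbidden contexts by appearance-checking rules $X\to X$ with failure-field continuation. This simulation is non-erasing whenever the given system is. For the reverse inclusion $\textsf{GC}_{ac}\subseteq\textsf{fRCCD}(=k)$, I would use the construction of Theorem~\ref{thm:GCac-characterization-by-OCD-all-modes} padded with self-loops as above, converted to forbidden context via Theorem~\ref{thm:forbiddenCD=orderedCD}. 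The padding check for the success components is the only nonroutine step.
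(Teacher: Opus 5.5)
Your argument is correct, but it takes a different route from the paper's. The paper proves the collapse by direct simulations that stay inside fRCCD. It simulates the $\geq k$- and $=k$-modes in the $=2$-mode, using step counters $X^{w}_{\ell}$ that store the pending right-hand side and component markers $Y_i$. Conversely, it simulates $=2$ in $=3$ (and, by adaptation, in $=k$ and $\geq k$), using primed copies of the first symbol of a right-hand side together with a reset component.

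You instead obtain the statement as a corollary of Theorems~\ref{thm:forbiddenCD=orderedCD} and~\ref{thm:GCac-characterization-by-OCD-all-modes}. This also identifies the common class as $\textsf{GC}_{ac}^{\,(-\emptyword)}$, which is how Figure~\ref{fig:survey-of-results} uses those two theorems anyway. Your padding check is accurate:
\begin{itemize}
\item in $P_{3i-2}$, every step after the second can only be $\widehat A_i\to\widehat A_i$; this loop also blocks $A_i\to\widehat A_i$, so exactly one occurrence is still hatted;
\item in $P_{3i-1}$ and in the failure component, the later steps can only be $\ell'\to\ell'$;
\item $P_0$ and $P_{n+1}$ pad with $\ell_{\text{init}}\to\ell_{\text{init}}$ and $\ell_{\text{fin}}\to\ell_{\text{fin}}$.
\end{itemize}
None of these rules opens an additional productive path, so the same argument covers $\geq k$.

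You are also right that the inclusion $\textsf{OCD}^{-\emptyword}(m)\subseteq\textsf{GC}_{ac}^{-\emptyword}$ is only implicit in the paper. Your direct graph-controlled simulation closes it without erasing rules: the step counter lives in the labels, and forbidden symbols are tested by $X\to X$ with empty success field.

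The trade-off is this. The paper's proof is self-contained within fRCCD and does not depend on the ``as the reader can verify'' in Theorem~\ref{thm:GCac-characterization-by-OCD-all-modes}. Yours reuses existing constructions, pins down the class, and is $\emptyword$-free throughout. The paper's first simulation, by contrast, uses the erasing rule $(Y_i\to\emptyword,N-Y_i)$, so its $-\emptyword$ reading needs an extra device such as the left-derivative trick of Theorem~\ref{thm:GCac-characterization-by-OCD-all-modes}.

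One correction concerns your fallback. In $P_{3i-2}$ the label $\widehat\ell_i$ has no self-loop; the label loops there exist only for $j\neq i$. Adding $\widehat\ell_i\to\widehat\ell_i$ with lowest priority would not be harmless. After $\ell_i\to\widehat\ell_i$, that loop is applicable exactly when $A_i$ does not occur, so $P_{3i-2}$ could complete its steps without hatting anything. Then $P_{3i-1}$ would rewrite $\widehat\ell_i$ to some $\ell'\in\sigma_i$ and pad with $\ell'\to\ell'$. The result is a success transition on a string that contains no $A_i$. In the $=k$- and $\geq k$-modes loops count as steps, so they are not neutral. Fortunately the rules already present suffice, and no extra loops are needed.
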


\begin{proof}
First, we will show that $\textsf{fRCCD}^{\,(-\emptyword)}(\geq k)$ and $\textsf{fRCCD}^{\,(-\emptyword)}(=k)$ are included in $\textsf{fRCCD}^{(-\emptyword)}(=2)$. 
    Let $G_{\geq k}=(N,\Sigma, P_1, \dots, P_n, S)$ be an fRCCDGS in $\geq k$-mode with components~$G_i=(N, \Sigma, P_i, S)$ where $$P_i=\{(A_{i,1} \to w_{i,1}, F_{i,1}), \dots , (A_{i,m_i} \to w_{i,m_i}, F_{i,m_i})\}\,.$$ We will construct an equivalent RCCDGS $G_{=2}$ that works in $=2$-mode. In this construction, the essential objective is that at least $k$ rules from one component~$G_i$ have to be used before rules from a different component can be applied. With that, it is necessary to count how many derivation steps have been taken (at least until $k$ is reached) and to ensure that only rules from one component of~$G_{\geq k}$ are applied in this time. 
    Set 
    $$G_{=2}=(N_{=2}, \Sigma,P_{\text{init}}, P_{1,0}, P_{1,1}, \dots ,P_{1,k+1}, \dots , P_{n,0},P_{n,1},\dots ,P_{n,k+1},S')\,.$$
     Let $W_i\subseteq (N \cup \Sigma)^*$ be the set of all words that appear on the right-hand side of any rule in~$P_{i}$ of~$G_i$ and $X_i$ be the set of all nonterminals of the form $X_{\ell}^w$ for $\ell \in [0\dots k+1]$ and $w \in W_i$. Further, let $X_{\cup}$ denote $\bigcup_{i \in [n]} X_i$ and $Y_\cup$ denote $\{Y\} \cup \bigcup_{i \in [n]} \{Y_i\}$; finally, set $N_{=2}=N\cup X_{\cup} \cup Y_\cup$. \\
    For each $G_i$ of $G_{\geq k}$, components~$G_{i,\ell}$, for $\ell \in [k+1]$, will be constructed. Since the set of permitting symbols is always empty, it will be omitted. 
    \\
    The very first derivation step has to be taken by component~$G_{\text{init}}$ which contains the only rule with the new starting symbol~$S'$ on the left-hand side of a rule: $P_{\text{init}}=\{(S\to S,\emptyset), (S'\to YS, \emptyset)\}$. 
    Afterwards, the RCCDGS~$G_{=2}$ can take a step in accordance with the rules of~$G_{\geq k}$. The nonterminals of the set~$X_i$ are used to count the number of derivations up to $k$ and are therefore necessary as a conector to the next component. They further store the word $w_{i,j}$ from the rule $(A_{i,j}\to w_{i,j},F_{i,j})$ of the ``original'' component~$G_i$.
     $$P_{i,0}=\{(A_{i,j}\to X_{1}^{w_{i,j}}, F_{i,j}\cup X_\cup \cup Y_\cup-Y) \mid j \in [m_i]\} \, \cup \{(Y \to Y_i,\emptyset ) \}$$
    The nonterminal $Y_i$ ensures that only rules from the ``original'' component~$G_i$ can be used. The corresponding nonterminals signifying other components (e.g., $Y_j$ with $j\neq i$) are forbidden in all rules of the current component. \\
    For all $1\leq\ell<k$: 
 
    $$\begin{array}{rl}
        P_{i,\ell}=&\{(X_{\ell}^{w_{i,j}}\to w_{i,j}, Y_\cup-Y_i) \mid j \in [m_i] \}\, \cup \\
    &\{(A_{i,j}\to X_{\ell+1}^{w_{i,j}}, F_{i,j}\cup X_\cup \cup  Y_\cup-Y_i) \mid j \in [m_i]\}
    \end{array}$$
    
    The word $w_{i,j}$ is derived and then, the next derivation step according to~$P_i$ is simulated. The counter is updated and 
    the word~$w_{i,j}$ is again stored in the nonterminal $X_{\ell+1}^{w_{i,j}}$. This is done to avoid $\emptyword$-rules. The counter reflects how many derivation steps have already been taken. 
    
    $$\begin{array}{rl}
    P_{i,k}=& \{(X_{k}^{w_{i,j}}\to w_{i,j},  Y_\cup-Y_i) \mid j \in [m_i] \}\,\cup \\
    &\{(A_{i,j}\to X_{k}^{w_{i,j}}, F_{i,j}\cup X_\cup \cup Y_\cup-Y_i) \mid j \in [m_i]\}
    \end{array}$$
    
    Since the minimum of $k$ steps has been reached, the counter does not need to be incremented anymore. The component $G_{i,k}$ can be reentered an arbitrary amount of steps to achieve $\geq k$. To leave the ``original'' component~$G_i$, enter component~$G_{i,k+1}$ .
    
    $$\begin{array}{rl}
    P_{i,k+1}=&\{(X_{k}^{w_{i,j}}\to w_{i,j},  Y_\cup-Y_i) \mid j \in [m_i] \}\, \cup \\
     & \{(Y_i \to Y,  Y_\cup-Y_i \cup X_i), (Y_i \to \lambda, N-Y_i) \}
    \end{array}$$
    
    In  component~$G_{i,k+1}$, the last stored word is written and the signifier~$Y_i$ is reset to $Y$, denoting that a new component can be chosen. If $Y_i$ is the last nonterminal left, it can be deleted to end the computation.
    
    By omitting the component~$G_{i,k}$, this construction also works to simulate a fRCCDGS working in $=k$-mode.\\

    Next, we will show that $\textsf{fRCCD}^{(-\emptyword)}(=2)$ is included in $\textsf{fRCCD}^{(-\emptyword)}(=3)$ and $\textsf{fRCCD}^{(-\emptyword)}(\geq3)$.
     Let $G_{=2}=(N,\Sigma, P_1, \dots, P_n, S)$ be an fRCCDGS in $=2$-mode with components~$G_i=(N, \Sigma, P_i, S)$ where $$P_i=\{(A_{i,1} \to w_{i,1},  F_{i,1}), \dots , (A_{i,m_i} \to w_{i,m_i},  F_{i,m_i})\}\,.$$

     \noindent
    Construct the fRCCDGS $G_{=3}=(N_{=3},\Sigma, P_{3,1}, \dots, P_{3,n}, P_{3,\text{reset}}, S)$ in $=3$-mode with $N_{=3}=N\cup N' \cup N'' \cup N''' \cup X_\cup$ where $X_\cup=\bigcup_{i \in n, (A \to w,F) \in P_i}X_w$. For each component $G_i$ of $G_{=2}$, \LV{will }construct a\SV{n} \LV{equivalent }component~$G_{3,i}$ where three derivation steps are taken. Let $G_i$ be defined as above, then:
    $$\begin{array}{rl}
    P_{3,i}= &\{(A_j \to X_{w_j}, F_j \cup N' \cup N'' \cup N''' \cup X_\cup) \mid j \in [m_i]\}\, \cup\\
                &\{ (X_{w_j} \to w_j', F_j \cup N' \cup N'' \cup N'''\cup X_\cup-X_{w_j}) \mid j \in [m_i]
               \}\\
               & \{(A_j \to w_j'', F_j \cup N'' \cup N''' \cup X_\cup) \mid j \in [m_i] \}\, \cup\\
               
               & \{(A_j' \to w_j''', F_j \cup N'' \cup N''' \cup X_\cup) \mid j \in [m_i]\}\, \cup\\
    \end{array}$$
    \noindent
    where $w_j' = a'v$ for $w_j = av$ with $a \in N\cup T \cup \{\emptyword\}, a' \in N', v \in (N \cup T)^*$; $w_j''$ and $w_j'''$ are defined analogously with the first symbol belonging to $N''$ and $N'''$, respectively. The markers on the first symbol of the derived word denote that the derivation took place. The forbidding property and the prolongation via $X_{w_j}$ ensure that only two rules from $G_i$ are applied. In case the first symbol of~$w_j'$ is a nonterminal that can be derived in the second step, the set of extra rules with $A_j'$ on the left-hand side is added. With that, it is possible to denote that two derivation steps took place even though no symbol from $N'$ is in the sentential form. \\
   Before the next component can be entered, the sentential form needs to be reset to only contain nonterminals from~$N$. This will be done with\LV{ the component} $G_{3,\text{reset}}$:
    $$\begin{array}{rl}
    P_{3,\text{reset}}= 
        & \{(A'\to A, X_\cup) \mid A' \in N'\}  \cup \{(A''\to A', X_\cup ) \mid A'' \in N''\} \, \cup \\
        & \{(A'''\to A'', X_\cup ) \mid A''' \in N'''\}  
    \end{array}$$
     When a component~$P_{3,i}$ is left, the sentential form either contains exactly one symbol from $A'''$ or one symbol from $N'$ and $N''$ each. It can easily be observed that three steps need to be taken in the component~$P_{3,\text{reset}}$ to transform all present symbols from $N'\cup N''\cup N'''$ into symbols from~$N$. 

     The construction can be adapted for $\geq 3$-mode by adding suitable unproductive rules. Further, it is easily seen that the adaption for any $k\geq 3$ is possible.\LV{ Therefore, the proposed equality holds.} \qed   
\end{proof}


\begin{toappendix}
Concerning the number of components as a complexity measure, Masopust~\cite[Thm.~1]{Mas2009} has shown that for $\textsf{fRCCD}^{(-\emptyword)}(t)$, two components suffice to generate all these languages. Theorem~\ref{thm:ordered-t} tells us that for ordered CDGS working in the $t$-mode, even one component is sufficient. If we denote (as usual with CDGS) the bound on the number of components by a subscript, we obtain:

\begin{corollary}
$\textsf{OCD}_1^{\,(-\emptyword)}(t)= \textsf{OCD}^{\,(-\emptyword)}(t)=\textsf{fRCCD}_1^{\,(-\emptyword)}(t)\subsetneq \textsf{fRCCD}^{\,(-\emptyword)}(t).$
\end{corollary}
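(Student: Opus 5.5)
The corollary is a chain of three facts, and I would prove them in order. First, $\textsf{OCD}_1^{(-\emptyword)}(t)=\textsf{OCD}^{(-\emptyword)}(t)$. The inclusion from left to right is trivial. For the converse I invoke Theorem~\ref{thm:ordered-t}: every OCDGS in $t$-mode is simulated by a single ordered grammar $\Gamma$. It remains to observe that an ordered grammar is the same as a one-component OCDGS working in $t$-mode. In one component, a $t$-mode step is any derivation $u\Rightarrow^*v$ after which no rule is applicable, so iterating such steps from $S$ reaches exactly the terminal words derivable in $\Gamma$. Indeed, terminal words admit no further rule application, and any derivation of $\Gamma$ ending in a terminal word is itself a single $t$-step. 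This shows $\textsf{ORD}^{(-\emptyword)}=\textsf{OCD}_1^{(-\emptyword)}(t)$. The construction in Theorem~\ref{thm:ordered-t} introduces no erasing rules beyond those already present, so the non-erasing case follows as well.

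Second, $\textsf{OCD}_1^{(-\emptyword)}(t)=\textsf{fRCCD}_1^{(-\emptyword)}(t)$. I would reuse the translation from the sketch of Theorem~\ref{thm:forbiddenCD=orderedCD}, but now with only one component, so the whole system is just one grammar. With a single component, $t$-mode again coincides with plain derivation of terminal words, as argued above. Hence $\textsf{fRCCD}_1^{(-\emptyword)}(t)=\textsf{fRC}^{(-\emptyword)}$. The classical equivalence $\textsf{fRC}^{(-\emptyword)}=\textsf{ORD}^{(-\emptyword)}$ from~\cite{CreMay73} (see~\cite{Fer96a}) then closes this step. The failure-symbol trick introduces no erasing rules, so the non-erasing variants match.

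Third, the strict inclusion $\textsf{fRCCD}_1^{(-\emptyword)}(t)\subsetneq\textsf{fRCCD}^{(-\emptyword)}(t)$. The inclusion itself is trivial. For strictness I use Masopust's result $\textsf{fRCCD}^{(-\emptyword)}(t)=\textsf{GC}_{ac}^{(-\emptyword)}$. The left-hand side equals $\textsf{ORD}^{(-\emptyword)}=\textsf{fRC}^{(-\emptyword)}$, so it suffices to know that $\textsf{fRC}^{(-\emptyword)}\subsetneq\textsf{GC}_{ac}^{(-\emptyword)}$. This is a known result from regulated rewriting: forbidding random context grammars are strictly weaker than programmed/graph-controlled grammars with appearance checking, both with and without erasing rules.

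The step I expect to be the real obstacle is this last strictness claim, because it rests on a separation proved elsewhere. In the erasing case I would cite $\textsf{fRC}\subsetneq\textsf{RE}$ from the literature, using that $\textsf{GC}_{ac}=\textsf{RE}$. In the non-erasing case I would cite the corresponding strictness $\textsf{fRC}^{-\emptyword}\subsetneq\textsf{GC}_{ac}^{-\emptyword}$ from the standard hierarchy of~\cite{Fer96a}.
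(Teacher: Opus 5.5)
Your proposal is correct and follows essentially the paper's route. Theorem~\ref{thm:ordered-t} collapses $t$-mode OCDGS to a single ordered grammar, and one-component $t$-mode systems coincide with their underlying ordered or forbidding grammar, so $\textsf{fRC}^{(-\emptyword)}=\textsf{ORD}^{(-\emptyword)}$ links the three classes; strictness then comes from Masopust's $\textsf{fRCCD}^{(-\emptyword)}(t)=\textsf{GC}_{ac}^{(-\emptyword)}$ together with the known separation of $\textsf{fRC}^{(-\emptyword)}$ from $\textsf{GC}_{ac}^{(-\emptyword)}$, the same fact behind Corollary~\ref{cor:frccd1<frccd2}, while you spell out the identification of one-component $t$-mode systems with plain grammars that the paper leaves implicit.
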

\end{toappendix}

\section{CD Grammar Systems With Internal Control}
\label{sec:CDGS-rc}

This is exactly the title of Section~4.2 of \cite{Csuetal94all}. We will now present a variant of Def.~4.4 from that book, tailored towards our purposes. The technical difference is that in the monograph, components might have symbols where the role of nonterminals and terminals is potentially interchanged. However, it is easy to check that the construction of  \cite[Lemma 3.6]{Csuetal94all} that shows that this more general definition is not giving more descriptive capacity also holds for \SV{our}\LV{the} variants\LV{ that we discuss in this section}.\LV{ Moreover, this justifies why we took this more restrictive version of the definition of CDGS as a basis, as has been done also in the literature from the mid 90s onwards. In particular, the forbidden context variants discussed above have always been defined as we did it in this paper.}

A CDGS with \emph{random context entry conditions} is given by a construct $G=(N,\Sigma,(P_1,E_1,F_1),\dots,(P_n,E_n,F_n),S)$, where any component $G_i=(N,\Sigma,P_i,S)$ is a context-free grammar and $E_i,F_i\subseteq N$ specify the \emph{entry conditions} of \LV{this component}\SV{$G_i$}.
Let $\Rightarrow_i$ \LV{refer to}\SV{be} the derivation relation of $G_i$\SV{ and $m\in\Modes{1}$}.\LV{ 
We again consider all classical modes.\footnote{They correspond to the exit conditions in Def. 4.4. of \cite{Csuetal94all}.}}
Now, $u\Rightarrow_i^{m,\text{rc}}$ \SV{iff}\LV{should mean that} 
\begin{itemize}
    \item  $u\Rightarrow_i^{m}$ (derivation condition), as well as
    \item 
$|u|_{A}>0$ for all $A\in E_i$ and $|u|_{A}=0$ for all $A\in F_i$ (random context entry conditions).
\end{itemize}
Here, $m\in\Modes{1}$.
As before, we define the derivation relation $\Rightarrow_{G,m,\text{rc}}$ as $\bigcup_{i=1}^n\Rightarrow_i^{m,\text{rc}}$. Accordingly, $\cL(G,m,\text{rc})=\{w\in \Sigma^*\mid S\Rightarrow_{G,m,\text{rc}}^*w\}$. We can now define language families like $\textsf{CD}^{(-\emptyword)}(m,\text{rc})$, depending on the mode~$m$ of derivation. 
Also, we will consider 
grammar systems with all random context conditions 
having empty permitting contexts, leading to $\textsf{CD}^{(-\emptyword)}(m,\text{frc})$. When writing 
forbidden RC conditions, it is sufficient to attach one set of nonterminals to the components. No prior study has been done concerning the restriction to permitting or forbidden context conditions to the best of our knowledge.

From \cite{Csuetal94all}, we get the following result, rephrased in our terminology:
\begin{theorem}\label{thm:CsuDasKelPau-book}
Let $k\geq 1$ be arbitrary.
   $\textsf{CD}^{\,(-\emptyword)}(= k,\text{rc})= \textsf{GC}_{ac}^{\,(-\emptyword)}$.
\end{theorem}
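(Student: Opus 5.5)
We prove the two inclusions separately; the argument is the same whether or not erasing rules are allowed.

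\emph{Inclusion $\textsf{CD}^{\,(-\emptyword)}(=k,\text{rc})\subseteq\textsf{GC}_{ac}^{\,(-\emptyword)}$.} Given a CDGS with random context entry conditions $(P_i,E_i,F_i)$, we build a graph-controlled grammar that has a block of labels for each component~$i$, entered only from a central ``choose'' state.
\begin{itemize}
\item The block first tests the entry conditions without changing the sentential form. For each $A\in F_i$ it applies $A\to A$ and may continue only via the failure field. For each $A\in E_i$ it applies $A\to A$ and may continue only via the success field.
\item It then passes through a chain of $k$ copies of the rules of $P_i$, where copy~$j$ has all of its success fields pointing into copy~$j+1$. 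This forces exactly $k$ applications of rules from $P_i$, with no failure exits.
\item After the $k$-th step, control returns to the choose state.
\end{itemize}
The final labels are the choose state, and no erasing rules are introduced. Hence every derivation of the grammar system corresponds to a run of the graph-controlled grammar and vice versa, by a routine induction over the number of component activations.

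\emph{Inclusion $\textsf{GC}_{ac}^{\,(-\emptyword)}\subseteq\textsf{CD}^{\,(-\emptyword)}(=k,\text{rc})$.} We follow the scheme of Theorem~\ref{thm:GCac-characterization-by-OCD-all-modes}, replacing the rule orderings by entry conditions. The sentential form is kept as $\ell v$ with a label symbol $\ell$ in front, and the components are chosen as follows.
\begin{itemize}
\item \emph{Failure case of rule $\ell_i$.} The entry condition permits $\ell_i$ and forbids $A_i$. The component contains $\ell_i\to\ell'$ for $\ell'\in\phi_i$, and to make up exactly $k$ steps it also contains $\ell'\to\ell'$.
\item \emph{Success case of rule $\ell_i$.} One component is entered with $\ell_i$ permitted and turns $\ell_i\to\widehat\ell_i$ and $A_i\to\widehat A_i$. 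A second component is entered with $\widehat\ell_i$ permitted and rewrites $\widehat A_i\to w_i$ and $\widehat\ell_i\to\ell'$ for $\ell'\in\sigma_i$.
\item \emph{Termination.} A final component forbids all nonterminals of $N_C\cup\widehat N_C$ and removes $\ell_{\text{fin}}$.
\end{itemize}
In the non-erasing case we avoid this final erasing rule by the left-derivative decomposition from that proof, replacing $\ell_{\text{fin}}\to\emptyword$ by $\ell_{\text{fin}}\to a$.

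\emph{Main obstacle.} The hard step is the exact-$k$ step count for arbitrary $k$, in particular $k=1$. Entry conditions are checked only once, at the moment a component is entered; afterwards the component runs freely for exactly $k$ steps. So the padding steps must neither change the meaning of the sentential form nor allow, say, two occurrences of $A_i$ to be turned into $\widehat A_i$.
\begin{itemize}
\item For $k\ge2$ I would use label-copying chains $\ell^{(1)}\to\ell^{(2)}\to\dots\to\ell^{(k)}$ inside each component. These counters force the intended real rewriting to occur exactly once, and further components, selected by entry conditions on the counter symbols, check that it did.
\item For $k=1$ there are no padding steps, so every component performs a single rule. The simulation then becomes exactly a random-context grammar with entry tests, in which each intermediate stage is encoded in the label symbol and filtered by the next component's permitted and forbidden sets.
\end{itemize}
Since random context grammars with appearance checking already characterize $\textsf{GC}_{ac}^{\,(-\emptyword)}$, this case can alternatively be settled by citing that equivalence directly.
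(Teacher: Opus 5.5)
The paper gives no proof of this statement; it imports it from \cite{Csuetal94all}. On its own merits, your inclusion $\textsf{CD}^{(-\lambda)}(=k,\text{rc})\subseteq\textsf{GC}_{ac}^{(-\lambda)}$ is fine. So is the case $k=1$ of the converse: in the $=1$ mode a component $(P_i,E_i,F_i)$ is just a bundle of random context rules sharing one context, so citing the equivalence of random context grammars with appearance checking and $\textsf{GC}_{ac}^{(-\lambda)}$ settles it.

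For $k\geq 2$ there is a genuine gap. Entry conditions are evaluated only once, and yours look only at the label (or counter) symbol, so a run that never moves the label is not a dead end. The first success component can spend all $k$ steps on $A_i\to\widehat{A}_i$ and leave $\ell_i$ in place. After that it can be entered again, and so can the failure component of $\ell_i$, which forbids only $A_i$. Symmetrically, the second success component can spend all $k$ steps on $\widehat{A}_i\to w_i$ and be re-entered on the unchanged label. Your counters detect a chain that stopped halfway, not one that never started. For $k=2$ the counter chain is just $\ell_i\to\widehat{\ell}_i$, i.e., your bulleted construction.

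Concretely, for $k=2$ take $(1:S_C\to AAC,\{2\},\emptyset)$, $(2:A\to A,\{4\},\{3\})$, $(3:C\to A,\{4\},\emptyset)$, $(4:A\to a,\{4\},\emptyset)$, with $L_{\text{init}}=\{1\}$ and $L_{\text{fin}}=\{4\}$. Its language is empty, because label $2$ always sees $A$, and $C$ can only be rewritten after a failure at $2$. Your system nevertheless derives $aaa$:
\begin{itemize}
\item $\ell_2AAC\Rightarrow\ell_2\widehat{A}\widehat{A}C$ (two hats, label unchanged);
\item the failure test of $2$ passes, giving $\ell_3\widehat{A}\widehat{A}C$;
\item simulating $3$ and the first half of $4$ reaches $\widehat{\ell}_4\widehat{A}\widehat{A}\widehat{A}$;
\item a label-free run of the second component of $4$ flushes two hats;
\item the simulation of $4$ then finishes and the system terminates with $aaa$.
\end{itemize}

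The missing ingredient is to use the forbidding part of the entry conditions the way Theorem~\ref{thm:GCac-characterization-by-OCD-all-modes} uses its blockers $\widehat{A}\to\widehat{A}$. Put all of $\widehat{N}_C$ into the forbidden set of every component entered on an unmarked label: the first success component, the failure component and the termination component. Also key no component on an intermediate counter symbol. Then any run of the first success component that does not complete its chain of $k-1$ label steps ends in a dead sentential form, so a surviving run hats exactly one $A_i$. Hence exactly one $\widehat{A}_i$ is present when the second component is entered, which forces that component to complete its chain as well. Finally, for $k\geq 2$ your termination component needs $k-1$ padding steps such as $\ell_{\text{fin}}\to\ell_{\text{fin}}$ before $\ell_{\text{fin}}$ is erased (or rewritten to $a$).
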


\begin{theorem}\label{thm:frc-CD}
  $ \textsf{CD}^{\,(-\emptyword)}(t,\text{frc})=\textsf{CD}^{\,(-\emptyword)}(t)$.  
\end{theorem}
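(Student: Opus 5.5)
The inclusion $\textsf{CD}^{\,(-\emptyword)}(t)\subseteq\textsf{CD}^{\,(-\emptyword)}(t,\text{frc})$ is trivial: we give every component an empty forbidden entry set. For the converse, I would simulate a forbidden entry condition inside an ordinary CDGS working in $t$-mode. The tool is the blocking effect that the paper already remarked on before Theorem~\ref{thm:ordered-t}. In $t$-mode, a component containing a rule $A\to A$ can never be left while $A$ is present. Such a derivation therefore never contributes to a terminal derivation.

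Let $G=(N,\Sigma,(P_1,\emptyset,F_1),\dots,(P_n,\emptyset,F_n),S)$ work in $t$-mode. For each $i\in[n]$, I introduce two fresh copies $A^{(i)}$ and $A^{[i]}$ of every $A\in N$. I let $h_i$ be the homomorphism mapping $A\mapsto A^{(i)}$ and fixing terminals. Let $D_i\subseteq N$ be the set of nonterminals that occur as left-hand sides in $P_i$. Each component $i$ of $G$ is then replaced by three components, all working in $t$-mode.
\begin{itemize}
\item \emph{Entry component} $C_i$. It contains $A\to A^{(i)}$ for $A\in N\setminus F_i$, together with $A\to A$ for $A\in F_i$. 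It also contains $X\to X$ for every marked symbol $X$ of any copy.
\item \emph{Working component} $P_i'$. It contains $A^{(i)}\to h_i(w)$ for $A\to w\in P_i$, and $A^{(i)}\to A^{[i]}$ for $A\notin D_i$.
\item \emph{Exit component} $B_i$. It contains $A^{[i]}\to A$ for all $A\in N$, together with the blocking rules $A^{(i)}\to A^{(i)}$ for all $A$.
\end{itemize}

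The correctness argument then runs through these components in turn. First, $C_i$ can terminate only if the sentential form contains no symbol of $F_i$ and no marked symbol. When it terminates, it has marked all nonterminals with $(i)$, which exactly simulates the forbidden entry check. Second, $P_i'$ reaches a $t$-stop precisely when no $A^{(i)}$ with $A\in D_i$ is left. By then every ruleless symbol has been converted to its $[i]$-copy, so this $t$-stop corresponds exactly to a $t$-stop of $P_i$ in $G$. Third, $B_i$ can only terminate once no $(i)$-symbol remains; entering it prematurely leads to an infinite derivation. It then restores the unmarked alphabet. Finally, all other components are inapplicable, or blocked, while marked symbols are present, because they have no rules for the foreign copies. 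The $C_j$ in particular contain the blocking rules $X\to X$.

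The main obstacle I expect is ruling out interleavings. One such case is entering $B_i$ or $C_j$ while only some symbols carry a copy marker. Another is applying $B_i$ before $P_i'$ has fully terminated. I would handle these with an invariant: every sentential form reachable in a terminating derivation lies in exactly one of $(N\cup\Sigma)^*$, $h_i(N\cup\Sigma)^*$ (possibly mixed with $[i]$-symbols only during $P_i'$), or $\{A^{[i]}\}^*$-marked forms. Any deviation from this invariant forces a component that can never terminate.

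Two further points complete the argument. Unproductive entries to $C_i$ (for example, on a terminal word) are harmless. The construction introduces no erasing rules beyond those already present in $P_i$, so the non-erasing case follows as well.
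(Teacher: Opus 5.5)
Your approach works and differs from the paper's, but one slip in the entry component has to be repaired first.

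\textbf{The slip.} As written, $C_i$ contains $X\to X$ for every marked symbol ``of any copy''. That includes $A^{(i)}\to A^{(i)}$. As soon as $C_i$ has marked a single symbol, it can never reach a $t$-stop. Hence no productive application of any $C_i$ to $S$ terminates, and your system generates the empty language. You must exclude the $(i)$-copy from the blocking rules of $C_i$. Your own correctness paragraph, which says that $C_i$ terminates after marking everything with $(i)$, already presupposes this.

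In fact you can drop all blocking rules on marked symbols, in $C_i$ as well as in $B_i$. Every component runs to a $t$-stop: $C_i$ marks every unmarked nonterminal, $P_i'$ removes every $(i)$-symbol, and $B_i$ removes every $[i]$-symbol. So between component applications the sentential form is purely unmarked, purely $(i)$-marked, or purely $[i]$-marked. On such forms only the intended components act productively: the $C_j$ on unmarked forms, $P_i'$ on $(i)$-forms, $B_i$ on $[i]$-forms. The only blocking that matters is $A\to A$ for $A\in F_i$. The interleaving obstacle you anticipate therefore does not arise, and the induction is a short case distinction.

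\textbf{Comparison with the paper.} The paper uses a single primed copy $N'$ with $h(A)=A'$ and a single unpriming component $\tilde P_0=\{A'\to A\mid A\in N\}$. It merges check and work into one component,
$\tilde P_i=\{A\to h(w)\mid A\to w\in P_i, A\notin F_i\}\cup\{h(A)\to h(w)\mid A\to w\in P_i\}\cup\{X\to X\mid X\in F_i\}$.
This needs only $n+1$ components and $|N|$ new nonterminals. However, nothing forces $\tilde P_0$ to run between two simulated components. Primed leftovers of $\tilde P_i$ are invisible to the unprimed check $X\to X$ of $\tilde P_j$, and they can be rewritten by the primed rules of $\tilde P_j$.

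For example, take $P_1=\{S\to AB\}$ with $F_1=\emptyset$, $P_2=\{A\to a\}$ with $F_2=\{B\}$, and $P_3=\{B\to b\}$ with $F_3=\{A\}$. This system generates the empty language in $t$-mode. The paper's construction nevertheless yields $S\Rightarrow^t_{\tilde P_1}A'B'\Rightarrow^t_{\tilde P_2}aB'\Rightarrow^t_{\tilde P_3}ab$.

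Your separate marking phase with component-specific copies excludes exactly this. The $F_i$-check is always performed on a completely unmarked form, and $(i)$-symbols can only be rewritten by $P_i'$. You pay with $3n$ components and $2n|N|$ new nonterminals, but with the repair your argument actually closes. The non-erasing case carries over exactly as you say.
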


\begin{proof}
The inclusion $\supseteq$ is trivial. For the other direction, consider a CDGS  with forbidden random context entry conditions, given by the construct $G=(N,\Sigma,(P_1,F_1),\dots,(P_n,F_n),S)$.
Let $N'=\{A'\mid A\in N\}$ and $\tilde{N}=N\cup N'$. Let $h:(N\cup \Sigma)^*\to (N'\cup \Sigma)^*$ be the homomorphism defined by $A\mapsto A'$ for $A\in N$ and $a\mapsto a$ for $a\in \Sigma$.
Define the CDGS $\tilde{G}=(\tilde{N},\Sigma,\tilde{P}_0,\tilde{P}_1,\dots, \tilde{P}_n,S)$ with 
$\tilde{P}_0=\{A'\to A\mid A\in N\}$ and, for $i\in [n]$, $\tilde{P}_i=\{A\to h(w)\mid A\to w\in P_i\land A\notin F_i\}\cup\{h(A)\to h(w)\mid A\to w\in P_i\}\cup\{X\to X\mid X\in F_i\}$.
A derivation step $u\Rightarrow_{P_i}v$ of $G$ is faithfully simulated by $u\Rightarrow_{\tilde{P}_i}u'\Rightarrow_{\tilde{P}_0} v $. In particular, a rule $A\to w\in P_i$ with $A\in F_i$ cannot applied in the very first step of component $P_i$, but it might be applied in a later step, which is possible within $\tilde{G}$ as well.
\qed 
\end{proof}

\begin{toappendix}
\begin{remark}
As $\textsf{CD}^{\,(-\emptyword)}(t)$ are the E(P)T0L languages which are strictly contained in the ordered languages (see~\cite{Pen75}), this shows that this ``internal control'' mechanism introduced in the spirit of \cite{Csuetal94all} is strictly weaker than the attachment of ordered or also forbidden random context to the individual rules (also in the $t$-mode) as studied in the previous sections.
Interestingly, looking at the proof of Lemma 4.10, using in turn Lemma 4.9, in \cite{Csuetal94all}, one can notice that this proof also works in the sense that it shows that E(P)T0L systems with forbidden random contexts can be simulated by $\textsf{CD}^{\,(-\emptyword)}(t,\text{frc})$. Hence, Theorem~\ref{thm:frc-CD} can be also used to prove that forbidden random context does not increase the power of ET0L systems, a result that was also obtained by Meduna and Svec~\cite{MedSve2003}\LV{ in different ways}.
\end{remark}    
\end{toappendix}

\begin{lemma}\label{lem:CDfrc-in-fRCCD}
    For any 
    $m \in \Modes{1}$, we find
    $\textsf{CD}^{\,(-\emptyword)}(m,\text{frc}) \subseteq \textsf{fRCCD}^{\,(-\lambda)}(m)$. 
\end{lemma}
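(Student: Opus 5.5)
The plan is to simulate a CDGS with forbidden entry conditions $G=(N,\Sigma,(P_1,F_1),\dots,(P_n,F_n),S)$ by an fRCCDGS $G'$ with one component $P'_i$ per component of $G$. The simulation is step-for-step: every derivation step of $P_i$ corresponds to exactly one derivation step of $P'_i$. Then $\Rightarrow_i^m$ and $\Rightarrow_{P'_i}^m$ match for every $m\in\Modes{1}$, including $t$, without any mode-specific bookkeeping. This rules out the reset-component trick from the proof of Theorem~\ref{thm:frc-CD}. In the $=k$-modes an extra reset component would itself have to make exactly $k$ steps, which cannot be guaranteed. So the information ``a derivation of component $i$ is in progress'' must be carried by the nonterminals themselves and removed lazily.

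Concretely, I would let the nonterminals of $G'$ be tagged copies $A_{(j,b)}$, with a component tag $j\in[n]$ and a parity bit $b\in\{0,1\}$, together with a start copy $A_{\bot}$. Let $h_{(i,b)}$ be the homomorphism that tags every nonterminal with $(i,b)$. The intended invariant is that between two component activations all nonterminals carry the same tag $(j,b)$, namely the tag of the last active component. Component $P'_i$ contains two kinds of rules, one pair for each tag $\tau\neq(i,b)$.
\begin{itemize}
\item \emph{Entry rules} $(A_\tau\to h_{(i,b)}(w),\; \{X_\tau\mid X\in F_i\}\cup\{\text{all symbols tagged }(i,b)\})$ for $A\to w\in P_i$. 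The forbidden set realizes exactly the entry condition $F_i$ on the old, uniformly tagged form. It also forces this to be the first step of the new activation, since no symbol tagged $(i,b)$ is present yet.
\item \emph{Continuation rules} $(A_{(i,b)}\to h_{(i,b)}(w),\emptyset)$ and $(A_\tau\to h_{(i,b)}(w),\{\text{all symbols tagged }\tau'\text{ with }\tau'\notin\{\tau,(i,b)\}\})$. These rewrite either already retagged symbols or symbols still carrying the single old tag, with no entry check, once a symbol tagged $(i,b)$ is present.
\end{itemize}
The parity bit handles consecutive activations of the same component $i$. A new activation of component $i$ uses the opposite bit, so its first step is again recognizable by the absence of its own tag. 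Including both bits for each $i$, nondeterministically, does not change the generated language.

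Correctness then needs two inductions. Every derivation of $G'$ projects, by forgetting tags, to a derivation of $G$ with the same component sequence and the same step counts. Conversely, every derivation of $G$ lifts to a derivation of $G'$. For the $t$-mode, I must check that $P'_i$ is blocked exactly when $P_i$ is blocked. This should hold because continuation rules cover every nonterminal occurrence with an old or new tag, and at most two tags coexist.

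The main obstacle I expect is the case where the first step of an activation produces no nonterminal at all, because $w\in\Sigma^*$ or $w=\emptyword$. Then no symbol tagged $(i,b)$ appears. The second step would wrongly be treated as an entry step, so the entry condition would be rechecked, and the first and second steps would become indistinguishable. My first attempt is to keep one persistent ``witness'' nonterminal that always carries the current tag. The entry rule rewrites this witness as part of the same step by attaching it to a rule that rewrites some $A$. Since only one symbol is rewritten per step, this has to be done by letting $h_{(i,b)}(w)$ always contain a tagged copy of a designated carrier symbol. That symbol is erased, or in the non-erasing case converted to a terminal (compare the $\ell_{\text{fin}}\to a$ trick in Theorem~\ref{thm:GCac-characterization-by-OCD-all-modes}), only at the very last step. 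Making this carrier symbol interact correctly with the $t$-mode and with non-erasing rules is where the detailed verification will lie.
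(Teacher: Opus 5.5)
\paragraph{The gap: the continuation rules bypass the entry check.}
Your rule $(A_\tau\to h_{(i,b)}(w),\{\text{symbols tagged }\tau'\notin\{\tau,(i,b)\}\})$ is not confined to the situation ``once a symbol tagged $(i,b)$ is present''. That would be a permitting condition, and fRC rules do not have one. Consider a form whose nonterminals all carry the tag $\tau$, which by your own invariant is exactly the situation at the start of an activation. There the forbidden set is vacuously respected, so this rule can serve as the \emph{first} step, and $F_i$ is never tested. For instance, take $P_1=\{A\to a\}$ with $F_1=\{B\}$. Your $P'_1$ rewrites $A_\tau$ in $A_\tau B_\tau$, although $G$ may not enter $P_1$ on $AB$. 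So derivations of your system no longer project to derivations of $G$.

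This is not just a matter of choosing the tags better. An fRC rule applicable to a form stays applicable when symbols disappear from it. The forms before and after the first step differ only in the one rewritten occurrence, so the fact that the first step has happened shows up only as the presence of new symbols, and no forbidden context can require presence. Your witness/carrier idea does not escape this. Retagging the old witness would be a rewriting step of its own. A fresh carrier produced inside $h_{(i,b)}(w)$ leaves the old $W_\tau$ in place: forbidding $W_\tau$ then blocks every continuation, and not forbidding it keeps the bypass open.

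Independently, lazy retagging is incompatible with your invariant. Nonterminals untouched by an activation keep their old tag, so after a few activations several tags coexist. Your entry rules then test $F_i$ only on the tag of the rewritten symbol, and your continuation rules, which forbid third tags, get stuck. This also invalidates the claimed $t$-mode correspondence. The one-component, step-for-step plan therefore does not go through.

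\paragraph{The paper's route.}
Your reason for discarding auxiliary components is also mistaken. What cannot be padded to exactly $k$ steps is a reset of unboundedly many tagged symbols. A component acting on a single marker is padded trivially by an unproductive loop, and this is what the paper does.
After $\tilde S\to YS$, a check component $P_{F_i}=\{(Y\to Y_{F_i},F_i)\}\cup\{(Y_{F_j}\to Y_{F_i},F_i)\mid j\in[n]\}$ tests $F_i$ on the marker. The loop $Y_{F_i}\to Y_{F_i}$ provides the padding and is dropped in the $t$-mode. Then $\tilde P_i$ contains the rules of $P_i$ unchanged, each with forbidden set $\{Y\}\cup\{Y_{F_j}\}_{j\neq i}$. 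Finally, a last component erases the marker once no symbol of $N$ is left. The entry test thus gets its own component instead of being folded into the first simulated step.

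Be aware that this sketch is itself loose in places. $Y_{F_i}$ survives an activation of $\tilde P_i$, so $\tilde P_i$ can be re-entered without a fresh test. That is harmless in the $*$, $\geq k$ and $t$ modes, but not in the $=k$ and $\leq k$ modes. Also, the start and end components as written cannot perform $k\geq 2$ steps. For $=k$ with $k\geq 2$, the inclusion also follows from $\textsf{CD}(=k,\text{frc})\subseteq\textsf{CD}(=k,\text{rc})$ together with Theorems~\ref{thm:CsuDasKelPau-book}, \ref{thm:GCac-characterization-by-OCD-all-modes} and~\ref{thm:forbiddenCD=orderedCD}.
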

\begin{proof}[Sketch]
    Let $G=(N,\Sigma,(P_1,F_1),\dots,(P_n,F_n),S)$ be an frcCDGS. In the corresponding fRCCDGS $\tilde{G}=(\tilde{N}, \Sigma, \tilde{P_0}, \tilde{P_1}, P_{F_1}, \dots, \tilde{P_n}, P_{F_n}, \tilde{P}_{\text{end}}, \tilde{S})$ with $\tilde{N}=N \cup \{Y\} \cup \{Y_{F_i}\}_{i \in [n]}$, the first step will introduce a symbol to denote which forbidden context has been checked: $\tilde{P_0}=\{\tilde{S} \to YS \}$.
    The component~$P_{F_i}=\{(Y \to Y_{F_i}, F_i)\}\cup \{(Y_{F_j} \to Y_{F_i}, F_i) \mid j \in [n]\}$ checks that the forbidden context of $P_i$ is fulfilled. Then, the derivation according to component $P_i$ can begin with $\tilde{P_i}=\{(A \to w, \{Y\} \cup \{Y_{F_j}\}_{j \in [n]-i}) \mid A \to w \in P_i\}$. At the end of the derivation, i.e., when no other nonterminals are left, delete the symbol $Y_{F_i}$: $\tilde{P}_{\text{end}}=\{(X \to \lambda, N) \mid X\in \{Y\} \cup \{Y_{F_i}\}_{i \in [n]} \}$.
    In $t$-mode, simply omit the rules of the form $Y_{F_i}\to Y_{F_i}$ from $P_{F_i}$ to avoid loops. This will not have any further effects since a component cannot be entered twice in a row in $t$-mode anyways, making the rule unnecessary.
    \qed
\end{proof}

\noindent
The converse direction is more demanding, as we will see now.

\begin{lemma}\label{lem:fRCCDinCDfrc=2}
    $\textsf{fRCCD}^{\,(-\lambda)}(=2) \subseteq \textsf{CD}^{\,(-\lambda)}(=2,\text{frc})$. 
\end{lemma}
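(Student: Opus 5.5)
The plan is to go through graph-controlled grammars rather than simulate an fRCCDGS directly. By Theorem~\ref{thm:forbiddenCD=orderedCD}, $\textsf{fRCCD}^{\,(-\emptyword)}(=2)=\textsf{OCD}^{\,(-\emptyword)}(=2)$. By Theorem~\ref{thm:GCac-characterization-by-OCD-all-modes}, this class equals $\textsf{GC}_{ac}^{\,(-\emptyword)}$. So it suffices to show $\textsf{GC}_{ac}^{\,(-\emptyword)}\subseteq\textsf{CD}^{\,(-\emptyword)}(=2,\text{frc})$. Theorem~\ref{thm:CsuDasKelPau-book} does not apply directly, since it uses permitting contexts as well. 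I will therefore redo the simulation of Theorem~\ref{thm:GCac-characterization-by-OCD-all-modes}, replacing the blocking rules $X\to X$ by forbidden entry conditions.

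Let $G_C$ have labels $\ell_1,\dots,\ell_r$ and rules $(\ell_i:A_i\to w_i,\sigma_i,\phi_i)$, with $\ell_i\notin\phi_i$ without loss of generality. The invariant is that every reachable non-terminal sentential form contains exactly one label-like symbol, taken from $\text{Lab}(G_C)\cup\{\tilde\ell_i\}$, together with symbols from $N_C\cup\{\bar A\mid A\in N_C\}\cup\Sigma$. The components are the following.

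\emph{Initialization:} rules $\{S\to\ell S_C,\ \ell\to\ell\mid \ell\in L_{\text{init}}\}$. Two steps force exactly $\ell S_C$, and $S$ never reappears.

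\emph{Success, first half} $S_i$: entry forbids all labels $\ell_j$ with $j\neq i$, all $\tilde\ell_j$, and all barred symbols. Its rules are $\ell_i\to\tilde\ell_i$ and $A_i\to\bar A_i$. Because of the invariant, the entry condition forces the current label to be $\ell_i$.

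\emph{Success, second half} $T_i$: entry forbids every $\ell_j$ (including $\ell_i$) and every $\tilde\ell_j$ with $j\neq i$. Its rules are $\bar A_i\to w_i$ and $\{\tilde\ell_i\to\ell'\mid \ell'\in\sigma_i\}$.

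\emph{Failure} $U_i$: entry forbids $A_i$ and all labels other than $\ell_i$. Its rules are $\{\ell_i\to\ell'\mid\ell'\in\phi_i\}\cup\{\ell'\to\ell'\mid \ell'\in\phi_i\}$. The two steps are then necessarily $\ell_i\to\ell'$ followed by $\ell'\to\ell'$, since $\ell_i\notin\phi_i$.

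\emph{Termination:} entry forbids all of $N_C$ and all barred and tilded symbols. Its rules are $\ell_{\text{fin}}\to\ell_{\text{fin}}$ and $\ell_{\text{fin}}\to\emptyword$ for $\ell_{\text{fin}}\in L_{\text{fin}}$.

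The core of the correctness argument is a case analysis of the exactly-two-step derivations of $S_i$ and $T_i$.

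In $S_i$ there are two possibilities. The first is that both rules are applied once each, in either order. The result is $\tilde\ell_i$ with exactly one $\bar A_i$, which is the intended outcome. The second is that $A_i\to\bar A_i$ is applied twice. Then $\ell_i$ survives together with two barred symbols, so $S_i$ can no longer be entered (barred symbols are forbidden) and $T_i$ cannot be entered either ($\ell_i$ is forbidden). Every other component is also blocked, either by its entry condition or because it has no applicable rules, and the barred symbols prevent termination. So this branch is a dead end.

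In $T_i$, after $S_i$ there is exactly one $\bar A_i$ and one $\tilde\ell_i$, so the only two-step derivation applies each rule once. This reproduces $(v,\ell_i)\Rightarrow_{G_C}(v'',\ell')$ with $\ell'\in\sigma_i$.

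The remaining checks are that no other component can interleave productively, which holds because of the label-based entry conditions and the uniqueness of the label symbol, and that unproductive or impossible applications never lead to a terminal word. Both follow by a routine induction, as in the proof of Theorem~\ref{thm:GCac-characterization-by-OCD-all-modes}. The converse inclusion $\cL(G_C)\subseteq\cL(G)$ is the straightforward direction.

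For the non-erasing case, the only erasing rules are $\ell_{\text{fin}}\to\emptyword$. I remove them with the same left-derivative decomposition $L=\bigcup_a\{a\}\delta_a(L)\cup F$ used before, replacing $\ell_{\text{fin}}\to\emptyword$ by $\ell_{\text{fin}}\to a$. This requires $\textsf{CD}^{-\emptyword}(=2,\text{frc})$ to be closed under union and to contain the finite sets $F\subseteq\Sigma$, both of which are easy. I expect the main obstacle to be the dead-end analysis of the $=2$-mode in $S_i$ and $T_i$. Forbidden entry conditions are checked only at entry, so I must verify that no mixed or duplicated two-step choice yields a sentential form that satisfies the invariant and yet simulates an illegal $G_C$ step. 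If the barred-symbol argument turns out to be leaky, my fallback is to give each $\bar A$ and $\tilde\ell$ an index $i$, so that leftover marks from a wrong branch are always forbidden everywhere.
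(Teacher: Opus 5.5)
Your proof is correct, but it takes a different route from the paper.

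\textbf{The paper's route.} The paper simulates an fRCCDGS in $=2$-mode directly. For each component $P_i$ it introduces frcCDGS components with at most two rules each. It uses $\#$ and the markers $X_p$ to first pick the occurrence(s) to be rewritten, checking the first rule's forbidden context at entry. The actual rewriting then happens in a further component whose entry condition checks the remaining forbidden context. Correctness rests on a case analysis: does the second rule rewrite a symbol produced by the first ($w=\xi A'\eta$) or an independent occurrence, and which left-hand side lies in which forbidden set.

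\textbf{Your route and what it costs.} You pass through $\textsf{fRCCD}(=2)=\textsf{OCD}(=2)=\textsf{GC}_{ac}$ and then simulate a graph-controlled grammar. There only one rule is applied at a time, and appearance checking becomes literally a forbidden entry condition, so the case analysis is much lighter. The price is heavier reliance on earlier results:
\begin{itemize}
\item You use the direction $\textsf{OCD}^{-\lambda}(=2)\subseteq\textsf{GC}_{ac}^{-\lambda}$ of Theorem~\ref{thm:GCac-characterization-by-OCD-all-modes}. The paper asserts this without spelling it out; it is a routine simulation of each ordered step by appearance-checking tests.
\item In the non-erasing case you need closure of $\textsf{GC}_{ac}^{-\lambda}$ under left derivatives, plus the union argument. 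The paper's direct construction introduces no erasing rules and needs neither.
\end{itemize}
The paper's construction also produces components of exactly the two-rule shape that it later prolongs in Theorem~\ref{thm:frc=k}. Your $T_i$ and $U_i$ have more rules, although they could be prolonged in a similar way.

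\textbf{A small inaccuracy in your dead-end analysis for $S_i$.} Suppose $A_i\to\bar A_i$ has been applied twice. It is not true that every other component is then blocked. If no unbarred $A_i$ remains, $U_i$ can still be entered, because its entry condition does not forbid barred symbols. Afterwards further failure components may follow. The branch is nevertheless dead, for the following reasons:
\begin{itemize}
\item Barred symbols can only be rewritten inside some $T_j$.
\item Every $T_j$ forbids all unhatted labels, so it needs a tilded label.
\item Tilded labels are produced only by the $S_j$, and every $S_j$ forbids barred symbols at entry.
\item The termination component also forbids barred symbols.
\end{itemize}
So barred symbols, once created alongside an unhatted label, persist forever. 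You should argue this persistence explicitly. Alternatively, add all barred and tilded symbols to the forbidden sets of the $U_i$; then your sentence ``every other component is blocked'' becomes literally true.
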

\SV{\begin{proof}[Sketch]
    Let $G=(N,\Sigma, P_1, P_2, \dots, P_n, S)$ be an fRCCDGS with components $G_i=(N,\Sigma, P_i, S)$.
    Construct frcCDGS $\tilde{G}$ with the start symbol~$S$, terminal alphabet~$\Sigma$ and nonterminal alphabet $\tilde{N}=N \cup \{\#\} \cup X_{\cup}$ with $X_{\cup} = \bigcup_{i\in[n], p\in P_i} \{X_p\} \cup \bigcup_{i\in [n]} \{X_{A\to \xi A'\eta \to \xi w'\eta,F,F'} \mid(A\to \xi A'\eta,F), (A'\to w',F') \in P_i\}$.
    For each component $P_i$, $i\in [n]$, we introduce several components, each containing at most two rules, into $\tilde{G}$, as described next.
    \begin{enumerate}
        \item Add component $P_{p}=\{A \to w\}$ with forbidden set $F_{p}=F\cup X_{\cup} \cup \{\#\}$ for all $p=(A\to w, F)$ where $|w|_F=0$, i.e., \LV{all rules in $P_i$ that}\SV{$p$} can be executed twice in a row.
        
        \item For all $p=(A\to w,F),p'=(A' \to w', F') \in P_i$ (with $p\neq p'$) that can be executed one after the other, i.e., 
        \begin{itemize}
            \item if $\{A,A'\} \cap (F\cup F') =\emptyset$ and ($|w|_{F'}=0$ or $|w'|_F=0$), or
            \item if $A\in F',A' \notin F$ and $|w|_{F'}=0$, or
            \item if $A\notin F',A' \in F$ and $|w'|_F=0$,
        \end{itemize}
        then add components \begin{itemize}
            \item 
        $P_{X,p}=\{A \to \#, \# \to X_{p}\}$ with $F_{X,p}=F\cup
        \{\#,X_{p}\} $ and 
        \item  $P_{X,p'}=\{A' \to \#, \# \to X_{p'}\}$ with $F_{X,p'}=F'\cup \{\#,X_{p'}\} $.
        \end{itemize} 
        Now, the nonterminals that will be derived have already been chosen and the order in which the rules are executed is irrelevant. Therefore, add component $P_{p, p'}=\{ X_{p} \to w, X_{p'} \to w'\}$ with $F_{p, p'}=F\cup F' \cup (X_{\cup} \setminus \{X_{p}, X_{p'}\}) \cup \{\#\}$.

        \item Add the component $P_{A\to \xi A'\eta \to \xi w'\eta,F,F'}=\{X_{p} \to \#, \# \to \xi A'\eta \}$ with the forbidden context $F_{A\to \xi A'\eta \to \xi w'\eta,F,F'}= F \cup F' \cup (X_{\cup} -X_{p}) \cup \{\#\} $ for all rules $p=(A\to w, F), p'=(A'\to w', F') \in P_i$ (with $(p\neq p'$) where $w=\xi A'\eta$ with $\xi,\eta\in (N\cup \Sigma)^*$ and $|\xi\eta|_{F'}=0$. The marking of exactly one $A$ (as in the afore-mentioned case) is necessary in case $A\in F$.  \qed
    \end{enumerate}
\end{proof}}

\begin{toappendix}
\SV{\medskip
\noindent
    \textbf{Proof of Lemma~\ref{lem:fRCCDinCDfrc=2}}}
    
    \begin{proof}
    To prove the correctness of this construction, consider first a derivation step $x\Rightarrow_i^{=2}y$ of $G$.
    This means that there are two rules $p=(A\to w,F),p'=(A' \to w', F') \in P_i$ such that $x\Rightarrow_p x'\Rightarrow_{p'}y$ or $x\Rightarrow_{p'} y'\Rightarrow_py$. W.l.o.g., consider the first case. As $x\Rightarrow_p x'$, $|x|_F=0$. Assume $x=uAv$ and $x'=uwv$. As $x'\Rightarrow_{p'}y$, furthermore, $|uwv|_{F'}=0$. In order to derive $y$, one occurrence of $A'$ within $x'$ is replaced by $w'$, i.e., $x'=u'A'v'$ and $y=u'w'v'$.
    
    We consider several situations that can happen:
    Firstly, we could find $p=p'$. Then, we see two subcases where the second replacement of $A$ takes place.
        \begin{itemize}
            \item If $|w|_A>0$, then we might see $w=w_1Aw_2$, so that $y=uw_1ww_2v$. This is only possible if $|uw_1Aw_2v|_F=0$. In the simulating grammar system, we add the component $P_p$ only if $|w|_F=0$. Moreover, the forbidden context set  $F_p$ contains $F$, so that $|x|_F=0$ is checked. Taken together, $|uAv|_F=0$ and $|w_1Aw_2|_F=0$ imply $|uw_1Aw_2v|_F=0$. Hence, such a derivation of $P_i$ is possible if and only component  $P_{p}$ of $\tilde{G}$ is applicable in the described way.
            \item If $|x|_A>1$, then on $x=x_0Ax_1Ax_2$, we could replace the two mentioned occurrences of $A$ in any order when applying $P_i$ of~$G$; this applicability also means that $|x|_F=0$ and $|w|_F=0$ for the second application. By having $F$ in its forbidden context set, $P_{p}$ checks that $|x|_F=0$; furthermore, this component is only added to $\tilde{G}$ if  $|w|_F=0$. Hence, $x\Rightarrow_i^{=2}y$ of $G$ (in the described case) is possible to simulate by $x\Rightarrow_{{p}}^{=2}y$ of $\tilde{G}$. (We continue using the subscripts of the components as subscripts of $\Rightarrow$ also if these are not simply numbers.)
        \end{itemize}
        Secondly, we might have $p\neq p'$. Again, we consider two similar subcases where the replacement of $A'$ happens.
        \begin{itemize}
            \item If $|w|_{A'}>0$, then we might see $w=w_1A'w_2$, so that $y=uw_1w'w_2v$. This is only possible if $|uw_1Aw_2v|_{F'}=0$. In the simulating grammar system, we have the components $P_{X,p}$ and $P_{A\to w_1A'w_2 \to w_1w'w_2,F,F'}$. We could derive in $\tilde{G}$ with $$x=uAv\Rightarrow_{X,p}^{=2}=uX_{p}v\Rightarrow_{A\to w_1A'w_2 \to w_1w'w_2,F,F'}^{=2}uw_1w'w_2v=y$$
            if $|x|_{F_{X,p}}=0$ (so that $|x|_{F}=0$) and $|uX_{p}v|_{F_{A\to w_1A'w_2 \to w_1w'w_2,F,F'}}=0$ (implying  $|uv|_{F'}=0$), as well as $|w_1w_2|_{F'}=0$ by construction. Hence, $x\Rightarrow_i^{=2}y$ of $G$ (in the described case) is possible  to simulate in $\tilde{G}$.
            \item Also, we might see that $x=x_0Ax_1A'x_2$ (the case when $A'$ occurs left of $A$ is similar) and $y=x_0wx_1w'x_2$. In~$G$, this is only possible if $|x|_F=0$ and if $|x_0wx_1A'x_2|_{F'}=0$. This can be simulated by $\tilde{G}$ as follows. First, the component $P_{X,A\to w, F}$ is applied, leading to $x\Rightarrow_{X,p}^{=2}x_0X_{p}x_1A'x_2$; this is possible only if $|x|_F=0$. Secondly, apply $P_{X,p'}$, so that $$x_0X_{p'}x_1A'x_2\Rightarrow_{X,p'}^{=2}x_0X_{p}x_1X_{p'}x_2\,;$$ also passing the test that $|x_0X_{p}x_1A'x_2|_{F'}=0$ as $|x_0wx_1A'x_2|_{F'}=0$. Finally, the component $P_{p,p'}$ is applied, leading to $$x_0X_{p}x_1X_{p'}x_2\Rightarrow_{p,p'}^{=2}x_0wx_1w'x_2=y\,,$$ passing the test $|x_0X_{p}x_1X_{p'}x_2|_{F_{p,p'}}=0$ as also $|x_0X_{p}x_1X_{p'}x_2|_{F\cup F'}=0$.
        \end{itemize}
    Conversely, consider some sentential form~$x$ that is derivable in $\tilde{G}$. As $\#$ is in the forbidden context of all components, $x\Rightarrow_{\tilde{G}}y$ is not possible for any $y$ if $|x|_\#>0$.
    This means that any component that makes use of~$\#$ must apply the two rules in the given order. We will use this simple observation frequently.

    Now, we discuss the different components one-by-one. 
    
    Firstly, assume that  $x\Rightarrow_{p}^{=2}y$. As $F_{p}$ is constructed, $x\in ((N\setminus F)\cup\Sigma)^*$. Moreover, this component is only introduced if $|w|_F=0$. Also, there is some $x'$ such that $x\Rightarrow x'\Rightarrow y$ by applying the context-free rule ${p}$. As $|x|_F=0$ and $|w|_F=0$, we can also consider using the forbidden random context rule $p=(A\to w,F)$ twice to first get $x'$ from $x$ and then $y$ from $x'$. By construction, $P_{p}$ and $F_{p}$ have been introduced into $\tilde{G}$ because the rule $p$ was contained in some component $P_i$. Therefore, $x\Rightarrow_i^{=2}y$ is possible in~$G$.

    Next, assume that $x\Rightarrow_{X,p}^{=2}y$. As $F_{X,p}=F\cup\linebreak[3] \{\#,X_{p}\} $, in particular $|x|_{X_{p}}=0$, so that there can never be more than one occurrence of each of the nonterminals  $X_{p}$ in any sentential form  derivable in $\tilde{G}$, assuming that we finally derive some terminal word from it, as by our previous observation, we can assume that only $|y|_\#=0$ is interesting. However, at this point one might think that such sentential forms may contain an arbitrarily large number of different nonterminals of the form $X_{p}$, corresponding to different rules $p=(A\to w,F)$ of~$G$. However, as (most of) $X_\cup$ is in the forbidden context of all other components, only those sentential forms can possibly derive terminal strings that contain at most two occurrences of nonterminals from $X_\cup$, and these two nonterminals must ``match''.

    Thirdly, assume that $x\Rightarrow_{p,p'}^{=2}y$ with $p=(A\to w,F), p'=(A'\to w', F')$. As $P_{p,p'}=\{ X_{p} \to w, X_{p'} \to w'\}$, $x$ must have at least one occurrence of $X_{p}$ and  at least one occurrence of $X_{p'}$. By our previous reasoning, $|x|_{X_{p}}=|x|_{X_{p'}}=1$. Assume that $x=x_0X_{p}x_1X_{p'}x_2$. Then, we must see $y=x_0wx_1w'x_2$. Recall that $X_{p}$ has been previously introduced by applying $P_{X,p}$ and similarly, $X_{p'}$ has been previously introduced by applying $P_{X,p'}$. By construction, there is a component $P_i$ of $G$ that contains both the forbidden random context rules $p$ and $p'$. In this sense, the two nonterminals $X_{p}$ and $X_{p'}$ have to ``match''. Moreover, by our observations above, there must exist sentential forms $u,v$ such that $|u|_{X_\cup}=0$ and (w.l.o.g., concerning the order of applications, as the discussion of the other ordering is symmetric) $$u\Rightarrow_{X,p}^{=2}v\Rightarrow_{X,p'}^{=2}x=x_0X_{p}x_1X_{p'}x_2\Rightarrow_{p, p'}^{=2}y\,.$$
    Hence, $u=x_0Ax_1A'x_2$. As we could apply $P_{X,p}$ on $u$, $|u|_F=0$, so that we could derive $u'=x_0wx_1A'x_2$ by applying the forbidden random context rule $p$. Furthermore, we know that $|v|_{F'}=0$, i.e., $|x_0X_{p}x_1A'x_2|_{F'}=0$. If $|w|_{F'}=0$, this would mean that it would be possible to apply  the forbidden random context rule $p'$ on $u'$ in order to obtain~$y$. Hence, $u\Rightarrow_i^{=2}y$ is possible as a derivation step in~$G$.
The given reasoning shows how a possible derivation in~$\tilde{G}$ corresponds to a derivation in~$G$ for the first two cases of the following list taken from the definition of~$\tilde{G}$ concerning the question when the simulating components are actually introduced:
 \begin{itemize}
            \item if $\{A,A'\} \cap (F\cup F') =\emptyset$ and ($|w|_{F'}=0$ or $|w'|_F=0$), or
            \item if $A\in F',A' \notin F$ and $|w|_{F'}=0$, or
            \item if $A\notin F',A' \in F$ and $|w'|_F=0$.
        \end{itemize}
The last case covers the case when (on the string~$u$), first $P_{X,p'}$ is applied and then $P_{X,p}$, because then we see $$u\Rightarrow_{X,p'}^{=2}v'=x_0Ax_1w'x_2\Rightarrow_{X,p'}^{=2}x=x_0X_{p}x_1X_{p'}x_2\,,$$
checking $|u|_{F'}=0$ and $|x_0Ax_1w'x_2|_F=0$, so that $|w'|_F=0$ but possibly $A' \in F$.
Hence, $u\Rightarrow_i^{=2}y$ is possible as a derivation step in~$G$ in all possible situations that could have led to $x$ in~$\tilde{G}$.

    Lastly, assume that $x\Rightarrow_{A\to \xi A'\eta \to \xi w'\eta,F,F'}^{=2}y$. As observed earlier, we can assume that $x=x_1 X_{A\to w, F}x_2$ and $y=x_1 \xi w'\eta x_2$. Also, we know that $x$ was obtained (within the derivation of~$\tilde{G}$) from some $v$ with $v=x_1 Ax_2$ by applying $P_{X,A\to w, F}$. By construction, there is a component $P_i$ that contains the rules 
    $(A\to w,F)$ with $w=\xi A'\eta$ and $(A'\to w',F')$. By having $v\Rightarrow_{X,A\to w, F}^{=2}x$, $|v|_F=0$ was checked, corresponding to deriving $v'=x_1wx_2$ from $v$ by applying  $(A\to w,F)$. Upon applying    
    $P_{A\to \xi A'\eta \to \xi w'\eta}$, we check $|x_1 X_{A\to w, F}x_2|_{F'}=0$. Moreover, we have introduced this component only if $|\eta\xi|_{F'}=0$. Hence, indeed it is possible to apply  $(A'\to w',F')$ on $v'=x_1\eta A'\xi x_2$, deriving $y$. Hence, $v\Rightarrow_i^{=2}y$ holds in~$G$.

The previously discussed four cases can be considered as the induction step of the following claim that then follows by an easy induction argument.
\emph{Let $u\in (N\cup\Sigma)^*$. If $u$ is derivable from $S$ in $\tilde{G}$, then $u$ is derivable from $S$ in~$G$.}
\qed
\end{proof}
\end{toappendix}

\noindent
The previous two lemmas are the cornerstones to show:

\begin{theorem}\label{thm:frc=k}
    $\textsf{fRCCD}^{\,(-\lambda)}(=2) = \textsf{CD}^{\,(-\lambda)}(=k,\text{frc})$ for all $k\geq 2$.
\end{theorem}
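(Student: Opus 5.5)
We prove the equality for a fixed $k\geq 2$ via the chain
$$\textsf{fRCCD}^{(-\emptyword)}(=2)\subseteq \textsf{CD}^{(-\emptyword)}(=2,\text{frc})\subseteq\textsf{fRCCD}^{(-\emptyword)}(=2)$$
together with the analogous chain through $=k$. For $k=2$ we only combine the two lemmas. Lemma~\ref{lem:fRCCDinCDfrc=2} gives $\textsf{fRCCD}^{(-\emptyword)}(=2)\subseteq\textsf{CD}^{(-\emptyword)}(=2,\text{frc})$. Lemma~\ref{lem:CDfrc-in-fRCCD}, applied with $m={=2}$, gives the reverse inclusion.

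For general $k\geq 2$, the inclusion $\textsf{CD}^{(-\emptyword)}(=k,\text{frc})\subseteq\textsf{fRCCD}^{(-\emptyword)}(=k)$ is again Lemma~\ref{lem:CDfrc-in-fRCCD}, now with $m={=k}$. Theorem~\ref{thm:mode-hierarchy-collaps} then gives $\textsf{fRCCD}^{(-\emptyword)}(=k)=\textsf{fRCCD}^{(-\emptyword)}(=2)$. One point needs checking: in the sketch of Lemma~\ref{lem:CDfrc-in-fRCCD}, the checking components $P_{F_i}$ and $\tilde P_{\text{end}}$ must be able to make exactly $k$ steps. For $P_{F_i}$ the loop rules $Y_{F_j}\to Y_{F_i}$ (including $j=i$) already allow this. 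For $\tilde P_{\text{end}}$ I would add a harmless unproductive loop, e.g.\ $Y_{F_i}\to Y_{F_i}$, before the final deletion. Alternatively, one can let the deletion be performed by a component that also contains looping rules on $Y_{F_i}$.

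The remaining inclusion is $\textsf{fRCCD}^{(-\emptyword)}(=2)\subseteq\textsf{CD}^{(-\emptyword)}(=k,\text{frc})$ for $k\geq 3$. I would not redo the construction of Lemma~\ref{lem:fRCCDinCDfrc=2} for $k$ steps. Instead I would go through $\textsf{CD}(=2,\text{frc})$ and prolong each component to exactly $k$ steps. Start from an frcCDGS $\tilde G$ in $=2$-mode, obtained from Lemma~\ref{lem:fRCCDinCDfrc=2}. For each component $(P,F)$, add fresh nonterminals $Z^{(1)},\dots,Z^{(k-2)}$, each private to that component. The component first turns one occurrence of its left-hand side $A$ into a chain marker: $A\to Z^{(1)}_A$, then $Z^{(j)}_A\to Z^{(j+1)}_A$, and finally $Z^{(k-2)}_A\to A$, which restores the symbol. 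This is the same idea as the $\#$-trick in Lemma~\ref{lem:fRCCDinCDfrc=2}. Because all markers are forbidden in every entry set, a component that leaves a marker behind blocks the derivation.

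The main obstacle is keeping exactly $k$ steps while preventing any extra interleaving. Intermediate markers must not enable different behaviour inside the same component. There are only two rules in each original component and these counter chains are linear, so the analysis is a finite case distinction. I expect it to mirror the case analysis in the proof of Lemma~\ref{lem:fRCCDinCDfrc=2}: a component is useful only if it ends marker-free, and any such run consists of the two original rule applications plus exactly $k-2$ dummy steps.

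A simpler fallback, which I would try first, avoids this analysis. Add to every component a padding marker instead. Introduce a fresh symbol $Q$ that is never forbidden inside the component, a rule $Q\to Q$, and let the initialization produce one copy of $Q$ that is erased at the end by a dedicated component. The looping $Q\to Q$ then pads any $2$-step run up to $k$ steps. The catch is that it also allows runs with fewer than two real rule applications. This is why the chain version with private markers, which forces the real steps in a fixed order, is the one I would aim to verify.
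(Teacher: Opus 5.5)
Your handling of $k=2$ matches the paper. So does the inclusion $\textsf{CD}^{(-\emptyword)}(=k,\text{frc})\subseteq\textsf{fRCCD}^{(-\emptyword)}(=k)=\textsf{fRCCD}^{(-\emptyword)}(=2)$ via Lemma~\ref{lem:CDfrc-in-fRCCD} and Theorem~\ref{thm:mode-hierarchy-collaps}, and your remark that $\tilde P_{\text{end}}$ needs an extra loop in $=k$-mode is correct. The gap is the prolongation for $k\geq 3$.

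Take $G$ with the single component $\{(S\to aS,\emptyset),(S\to a,\emptyset)\}$, so $\cL(G,=2)=\{a^{2n}\mid n\geq 1\}$. As written, your detour $A\to Z^{(1)}_A\to\cdots\to Z^{(k-2)}_A\to A$ has $k-1$ steps, not $k-2$. A completed detour therefore leaves room for only one real step, and the component $P_{p'}=\{S\to a\}$ of Lemma~\ref{lem:fRCCDinCDfrc=2} derives $S\Rightarrow^{k}a$ in one run. For $k=3$, a detour of length $1$ through a fresh marker back to $A$ does not exist at all.

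More seriously, even with the length corrected, the detour is decoupled from the real productions, so the $=k$ requirement does not force it to be used. A marker-free run may simply consist of $k$ real applications. You assume every component has two rules, but Lemma~\ref{lem:fRCCDinCDfrc=2} also produces one-rule components $P_p=\{A\to w\}$, which can apply $A\to w$ $k$ times whenever $w$ contains $A$. Concretely, for odd $k\geq 5$ your system derives:
\begin{itemize}
\item $a^kS$ with $P_p$, $p=(S\to aS,\emptyset)$, using no detour;
\item then $a^kX_p$ with $P_{X,p}$;
\item then $a^{k+2}$ with the component simulating $S\to aS$ followed by $S\to a$.
\end{itemize}
This is a word of odd length, so it is not in $\cL(G,=2)$. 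For even $k\geq 4$, the component $\{A\to\#,\#\to X_p\}$ can create $k/2$ copies of $X_p$ in one run. That destroys the invariant $|x|_{X_p}\leq 1$, on which the correctness argument of Lemma~\ref{lem:fRCCDinCDfrc=2} rests. So the case analysis you expect does not go through.

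The missing idea is to put the counter on the path from the rewritten symbol to its real right-hand side, not on a detour that returns to the same symbol. The paper proceeds in three steps:
\begin{itemize}
\item It first removes the one-rule components $P_p$ by routing a double application of $p$ through two distinct markers $X_p,X'_p$, which are handled by the second and third cases of Lemma~\ref{lem:fRCCDinCDfrc=2}. Now every component has exactly two rules.
\item A component $\{Y\to\#,\#\to y\}$ becomes the chain $Y\to\#_1\to\cdots\to\#_{k-1}\to y$.
\item In $\{X_p\to w,X_{p'}\to w'\}$, one of the two productions is delayed through a chain of length $k-1$.
\end{itemize}
All $\#_i$ are forbidden on entry, so a chain started in a run must be completed within it. Hence a chain of length $k$ occurs exactly once per run. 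In the two-marker components, $X_p$ and $X_{p'}$ occur at most once, so at most $(k-1)+1=k$ steps are available at all. Exactly $k$ steps is therefore possible only for the intended run, and no real production can be applied more often than in $G$.
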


This theorem generalizes Theorem~\ref{thm:CsuDasKelPau-book} considerably, as there, also permitting context conditions have been used. This theorem shows that this is not necessary.  

\begin{proof}[Sketch]
From Lemma~\ref{lem:fRCCDinCDfrc=2} in particular, we know that the case $k=2$ is settled with Lemma~\ref{lem:CDfrc-in-fRCCD}. For the bigger values of $k$, we only need to generalize the construction of Lemma~\ref{lem:fRCCDinCDfrc=2} because of Theorem~\ref{thm:frc=k}.
First, notice that we introduced the very first case (concerning a double application of $A\to w$) into the proof of Lemma~\ref{lem:fRCCDinCDfrc=2} only in order to reduce the number of nonterminals. Alternatively, we could have used two $X$-type variables, say, $X_{A\to w, F}$ and $X_{A\to w, F}'$, that can then be handled by the second and third cases as described in that proof. By doing so, we end up with using more nonterminals, but we gain that all components contain exactly two rules. If  this component is of the form $\{Y\to\#,\#\to y\}$, then it is trivially to ``prolong'' the process by adding at most $k-2$ new nonterminals $\#_i$ so that one can count up to $k-2$ waiting steps. Otherwise, we see two different nonterminals $Y,Z$ with their corresponding rules in $\{Y\to y, Z\to z\}$, and again, we can use our `counting nonterminals'  $\#_i$ to prolong one of the two expected derivations. This way, it is clear that $\textsf{fRCCD}^{\,(-\lambda)}(=2) \subseteq \textsf{CD}^{\,(-\lambda)}(=k,\text{frc})$. \qed
\end{proof}

\SV{Notice that this is a considerable strengthening of Theorem~\ref{thm:CsuDasKelPau-book} where the full power of random context conditions have been used to prove such a result.}
\begin{toappendix}
\begin{remark}    
Notice that this is a considerable strengthening of Theorem~\ref{thm:CsuDasKelPau-book} where the full power of random context conditions have been used to prove such a result (namely, compare also with Theorems~\ref{thm:GCac-characterization-by-OCD-all-modes} and~\ref{thm:forbiddenCD=orderedCD}). 
However, we are not sure at all about the power of $\textsf{CD}^{\,(-\lambda)}(\geq k,\text{frc})$ if $k\geq 2$. These systems might be strictly weaker than $\textsf{fRCCD}^{\,(-\lambda)}(\geq k)$.
\end{remark}
\end{toappendix}

\section{CD Grammar Systems With Priorities}
\label{sec:PCDGS}

We can observe further connections to the literature. In \cite{MitPauRoz94}, so-called \emph{grammar systems with priorities} have been introduced; 
they are defined as follows.
A cooperating distributed grammar system with priorities 
is given by a construct $G=(N,\Sigma,P_1,\dots,P_n,>,S)$, where $(N,\Sigma,P_1,\dots,P_n,S)$ is a CDGS is a strict ordering on $\{P_1,\dots,P_n\}$.
Let $\Rightarrow_i$ refer to the derivation relation of $G_i=(N,\Sigma,P_i)$. 
We again consider all classical modes~$m$. 
Now for $x,y\in (N\cup \Sigma)^*$, $x\Rightarrow_i^{m,>}y$ if and only if $x\Rightarrow_i^{m}y$ and for no $j\in [n]$ with $P_j>P_i$, there exists a $z\in (N\cup \Sigma)^*$ such that $x\Rightarrow_j^{m,>}z$. Let $\Rightarrow_{m,>}=\bigcup_{i\in[n]}\Rightarrow_i^{m,>}$.
This way, we can finally define the generated language $\cL_m(G)=\{y\in (N\cup \Sigma)^*\mid S\Rightarrow_{m,>}^*y\}$. We arrive at language classes like $\textsf{PCD}^{(-\emptyword)}(m)$.

\begin{remark}\label{rem:PCD-survey}
From~\cite{MitPauRoz94}, we know the following inclusion relations (in our terminology, employing some further well-known results from regulated rewriting; the numbers refer to the theorems of that paper):
\begin{enumerate}\setcounter{enumi}{1}
\item $\textsf{CD}^{\,(-\emptyword)}(t)=\textsf{PCD}^{\,(-\emptyword)}(t)$.
\item  $\textsf{ORD}^{\,(-\emptyword)}\subseteq \textsf{PCD}^{\,(-\emptyword)}(m)$ for $m\in\{=k,\leq k \mid k\geq 1\}$.
    \item  $\textsf{PCD}^{\,(-\emptyword)}(m)\subseteq \textsf{GC}^{\,(-\emptyword)}_{\text{ac}}$ for  $m\in\{=k,\leq k,\geq k\mid k\geq 1\}\linebreak[3]\cup\{t\}$.
\end{enumerate}
\end{remark}

Now, we compare PCDGS with CDGS with frc entry conditions. On an intuitive level, this correspondence looks very reasonable, as both variants of CDGS apply either the ``ordered paradigm'' or the ``forbidden random context'' paradigm to components and their applicability, instead of associating them to the rules of any component as we have studied in the preceding sections. However, there are some technicalities that need to be taken into consideration.

\begin{lemma}\label{lem:cd<pcd}
For all modes $m\in\Modes{1}$,
 $\textsf{CD}^{\,(-\emptyword)}(m,\text{frc})\subseteq\textsf{PCD}^{\,(-\emptyword)}(m)$.
\end{lemma}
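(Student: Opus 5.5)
The plan is a direct simulation in which each forbidden entry condition is replaced by higher-priority ``blocking'' components that consist only of unproductive rules. The $t$-mode is handled separately at the end.

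Let $G=(N,\Sigma,(P_1,F_1),\dots,(P_n,F_n),S)$ be a CDGS with forbidden entry conditions working in mode $m\neq t$. For every nonterminal $X\in\bigcup_{i\in[n]}F_i$, introduce a new component $Q_X=\{X\to X\}$. Define the PCDGS
$$G'=(N,\Sigma,P_1,\dots,P_n,(Q_X)_{X\in\bigcup_iF_i},>,S),$$
where $Q_X>P_i$ holds exactly when $X\in F_i$, and there are no other priority pairs. This relation only relates some $Q_X$ to some $P_i$ and never forms chains, so it is trivially transitive and asymmetric, i.e., a strict ordering. The construction adds no erasing rules, so the non-erasing variant is covered at the same time.

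The key observation concerns when the blocking components can act. If $X$ occurs in the sentential form $x$, then $X\to X$ can be applied any number $\ell\geq 1$ of times. Hence $x\Rightarrow_{Q_X}^{m}x$ holds for every $m\in\{=k,\leq k,\geq k\mid k\geq1\}\cup\{*\}$. Since $Q_X$ is maximal in $>$, this derivation is also a $\Rightarrow^{m,>}$-derivation. If $X$ does not occur in $x$, then $Q_X$ cannot perform even one step. Consequently, $P_i$ is blocked on $x$ exactly when some $X\in F_i$ occurs in $x$, which is precisely the forbidden entry condition of $(P_i,F_i)$. The steps $x\Rightarrow^{m,>}_{Q_X}x$ are unproductive and can be ignored, so both inclusions follow by a straightforward induction on the length of derivations.

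The main point to check is the convention that a derivation in modes $*$ and $\leq k$ consists of at least one step. If zero-step derivations were admitted, every $Q_X$ would always be ``applicable'' and would block everything. Under the convention used in the paper (productive or unproductive applications of at least one rule), the argument above goes through. The $t$-mode genuinely fails with this construction: $X\to X$ never terminates, so $Q_X$ has no $t$-derivation and blocks nothing. For $m=t$, I would instead combine Theorem~\ref{thm:frc-CD}, which gives $\textsf{CD}^{\,(-\emptyword)}(t,\text{frc})=\textsf{CD}^{\,(-\emptyword)}(t)$, with item~2 of Remark~\ref{rem:PCD-survey}, which gives $\textsf{CD}^{\,(-\emptyword)}(t)=\textsf{PCD}^{\,(-\emptyword)}(t)$. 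This yields the claim for all $m\in\Modes{1}$.
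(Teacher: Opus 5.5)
Your proposal is correct. For modes other than $t$ it uses the paper's idea; for $t$ it takes a genuinely different route.

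For modes other than $t$, higher-priority components built from rules $X\to X$ fire exactly when a symbol of $F_i$ is present. Grouping them per forbidden nonterminal ($Q_X$) instead of per component ($P_{\text{fail},i}>P_i$) makes no difference.

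The divergence is the $t$-mode. The paper adds trap rules, $P_{\text{fail},i}=\{X\to X,\,X\to X_f\mid X\in F_i\}$. This component has a $t$-derivation (rewrite every forbidden symbol into $X_f$) exactly when some $X\in F_i$ occurs. Since $X_f$ is never rewritten, such branches never yield terminal words. So one self-contained construction covers all of $\Modes{1}$, and adding $X\to X_f$ to your $Q_X$ would do the same for you.

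You instead observe, correctly, that $\{X\to X\}$ blocks nothing in $t$-mode. You then combine Theorem~\ref{thm:frc-CD} with $\textsf{CD}(t)\subseteq\textsf{PCD}(t)$. The empty priority relation already gives that inclusion, so item~2 of Remark~\ref{rem:PCD-survey} is more than you need.

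What your route buys is independence from the at-least-one-step convention in the $t$-case, which the trap construction silently needs. If $\Rightarrow_i^t$ is read literally via $\Rightarrow_i^*$, then $P_{\text{fail},i}$ $t$-applies in zero steps to every form free of $F_i$ and blocks $P_i$ forever. For $*$ and $\leq k$, both constructions need the convention you flag.

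The price is that your $t$-case rests on Theorem~\ref{thm:frc-CD}, whose sketched proof needs repair. If $\tilde P_0$ is skipped between two components, primed copies of forbidden symbols are invisible to the rules $X\to X$ of the next component. Take the components $(\{S\to AB\},\emptyset)$, $(\{A\to a\},\{B\})$ and $(\{B\to b\},\{A\})$. In $t$-mode they generate nothing, yet the simulating system derives $S\Rightarrow^t_{\tilde P_1}A'B'\Rightarrow^t_{\tilde P_2}aB'\Rightarrow^t_{\tilde P_3}ab$.

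The statement of that theorem can be rescued, for instance by marking symbols with the index of the component that produced them and adding blocking rules for foreign marks to every other component. Still, the trap rule is the cheaper way to make your argument self-contained.
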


\begin{proof}[Sketch]
Let the frcCDGS $G=(N,\Sigma,(P_1,F_1),\dots,(P_n,F_n),S)$ be given. Let $X_f\notin N$.
To each $F_i$, associate the component $P_{\text{fail},i}=\{X\to X,X\to X_f\mid X\in F_i\}$.\footnote{We can omit these production sets if they are empty, but this is of no concern here.}
Consider the PCDGS $$G'=(N\cup\{X_f\},P_1,\dots,P_n,P_{\text{fail},1},\dots,P_{\text{fail},n},>,S)\,,$$
where $>$ is defined by  $P_{\text{fail},i}>P_i$ for all $i\in [n]$.
It should be obvious that the derivation relation defined by $G$ and by $G'$ are identical (formally ignoring the failure symbol $X_f$).\qed 
\end{proof}

\begin{theorem}\label{thm:pcd=cd}
For all derivation modes $m\in\{\leq k,=k \mid k\geq 1\}\linebreak[3]\cup\{*,t,\geq 1\}$, we find
 $\textsf{CD}^{\,(-\emptyword)}(m,\text{frc})=\textsf{PCD}^{\,(-\emptyword)}(m)$.
\end{theorem}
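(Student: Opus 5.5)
By Lemma~\ref{lem:cd<pcd}, the inclusion $\textsf{CD}^{\,(-\emptyword)}(m,\text{frc})\subseteq\textsf{PCD}^{\,(-\emptyword)}(m)$ holds for all modes, so only $\textsf{PCD}^{\,(-\emptyword)}(m)\subseteq\textsf{CD}^{\,(-\emptyword)}(m,\text{frc})$ remains. I would split the modes into three groups and treat each by the cheapest available route.

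\emph{Group 1: $m\in\{\leq k\mid k\geq 1\}\cup\{*,=1,\geq 1\}$.} Here whether a component $P_j$ can perform an $m$-step on $x$ depends only on whether some rule of $P_j$ is applicable to $x$. This uses the convention, already adopted in the paper, that we count only productive (nonzero-step) applications. Applicability of some rule of $P_j$ is in turn equivalent to $|x|_A>0$ for some $A\in\text{lhs}(P_j)$, the set of left-hand sides of $P_j$. Given a PCDGS $G=(N,\Sigma,P_1,\dots,P_n,>,S)$, I would therefore form the frcCDGS with components $(P_i,F_i)$, where
$$F_i=\bigcup_{P_j>P_i}\text{lhs}(P_j).$$
The entry condition of $(P_i,F_i)$ then holds exactly when no higher-priority component could act. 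A one-step induction over derivations shows that $\Rightarrow_{m,>}$ and $\Rightarrow_{G',m,\text{rc}}$ coincide, and the construction introduces no erasing rules.

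\emph{Group 2: $m=t$.} I would chain known equalities rather than argue directly. Item~2 of Remark~\ref{rem:PCD-survey} gives $\textsf{PCD}^{\,(-\emptyword)}(t)=\textsf{CD}^{\,(-\emptyword)}(t)$, and Theorem~\ref{thm:frc-CD} gives $\textsf{CD}^{\,(-\emptyword)}(t)=\textsf{CD}^{\,(-\emptyword)}(t,\text{frc})$.

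\emph{Group 3: $m={=k}$ with $k\geq 2$.} This is the main obstacle. Whether a higher component can do exactly $k$ steps depends on the multiplicities of symbols (for instance, a lone rule $A\to a$ needs $k$ copies of $A$), so it cannot be expressed by a forbidden set. I would therefore avoid a direct simulation and go through $\textsf{GC}_{ac}$. Item~4 of Remark~\ref{rem:PCD-survey} gives $\textsf{PCD}^{\,(-\emptyword)}(=k)\subseteq\textsf{GC}_{ac}^{\,(-\emptyword)}$. Then:
\begin{itemize}
\item Theorem~\ref{thm:GCac-characterization-by-OCD-all-modes} gives $\textsf{GC}_{ac}^{\,(-\emptyword)}=\textsf{OCD}^{\,(-\emptyword)}(=2)$;
\item Theorem~\ref{thm:forbiddenCD=orderedCD} gives $\textsf{OCD}^{\,(-\emptyword)}(=2)=\textsf{fRCCD}^{\,(-\emptyword)}(=2)$;
\item Theorem~\ref{thm:frc=k} gives $\textsf{fRCCD}^{\,(-\emptyword)}(=2)=\textsf{CD}^{\,(-\emptyword)}(=k,\text{frc})$.
\end{itemize}
Composing these yields the inclusion in both the erasing and the non-erasing variant.

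The point to double-check is that each cited result is stated consistently for both the erasing and non-erasing cases, which the $(-\emptyword)$ notation in each of them asserts. The only genuinely new argument is the short direct construction for Group~1.
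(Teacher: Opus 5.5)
Your proposal is correct and follows the paper's proof essentially step for step. You get $\subseteq$ from Lemma~\ref{lem:cd<pcd} and handle $=k$ ($k\geq 2$) via $\textsf{PCD}(=k)\subseteq\textsf{GC}_{ac}=\textsf{OCD}(=2)=\textsf{fRCCD}(=2)=\textsf{CD}(=k,\text{frc})$. The $t$-mode goes through Item~2 of Remark~\ref{rem:PCD-survey} and Theorem~\ref{thm:frc-CD}, and the remaining modes by putting the left-hand sides of higher-priority components into the forbidden sets. You explicitly cover $\geq 1$, which the paper's sketch leaves out of its list. One point to add: the blocking condition in the definition of $\Rightarrow_i^{m,>}$ refers recursively to $\Rightarrow_j^{m,>}$, not to $\Rightarrow_j^{m}$. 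Transitivity of $>$ is what makes it equivalent to the non-recursive condition your $F_i$ encodes: take a $>$-maximal applicable component above $P_i$.
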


\begin{proof}[Sketch] For the inclusion $\subseteq$, we refer to Lemma~\ref{lem:cd<pcd}. We now discuss the inclusion $\supseteq$.
    For the  $=k$-modes, this follows via inclusion chain from Theorem~4 from~\cite{MitPauRoz94} and the Theorems~\ref{thm:GCac-characterization-by-OCD-all-modes},~\ref{thm:forbiddenCD=orderedCD} and~\ref{thm:frc=k}. 
    For the $t$-mode, this follows from Theorem~2 from~\cite{MitPauRoz94} and Theorem~\ref{thm:frc-CD}.
For the remaining modes $m\in\{\leq k \mid k\geq 1\}\linebreak[3]\cup\{*,=1\}$, we argue as follows.
If $P_i>P_j$, then we put all left-hand sides of productions in $P_i$ into the forbidden set $F_j$ of $P_j$, because applicability of $P_i$ means that any of its rules might be applicable for any of these modes. 
\qed
\end{proof}

With that, we have answered the open problem from \cite{MitPauRoz94} whether $\textsf{PCD}^{(-\emptyword)}$ is strictly included in $\textsf{MAT}_{ac}^{\,(-\emptyword)}$ for $=k$- and $\leq k$-mode.

\begin{corollary}\label{cor:CD=fRC=PCD}
    $\textsf{CD}^{\,(-\emptyword)}(\leq k, \text{frc}) = \textsf{fRC}^{\,(-\emptyword)} = \textsf{PCD}^{\,(-\emptyword)}(\leq k)$.
\end{corollary}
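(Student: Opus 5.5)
The corollary follows by chaining earlier results for the modes $\leq k$ with $k\geq 1$. The plan is to establish the two equalities separately, in the order $\textsf{CD}^{\,(-\emptyword)}(\leq k,\text{frc})=\textsf{PCD}^{\,(-\emptyword)}(\leq k)$ and then $\textsf{CD}^{\,(-\emptyword)}(\leq k,\text{frc})=\textsf{fRC}^{\,(-\emptyword)}$.

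The first equality is exactly Theorem~\ref{thm:pcd=cd} specialized to the mode $\leq k$, since $\{\leq k\mid k\geq 1\}$ is among the modes covered there. Nothing further is needed.

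For the second equality I would prove two inclusions.

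\emph{The inclusion $\subseteq$.} Lemma~\ref{lem:CDfrc-in-fRCCD} gives $\textsf{CD}^{\,(-\emptyword)}(\leq k,\text{frc})\subseteq\textsf{fRCCD}^{\,(-\emptyword)}(\leq k)$. Theorem~\ref{thm:frccd1=frc} then gives $\textsf{fRCCD}^{\,(-\emptyword)}(\leq k)=\textsf{fRC}^{\,(-\emptyword)}$.

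\emph{The inclusion $\supseteq$.} Take a forbidden random context grammar with rules $(A_j\to w_j,F_j)$. Build the frcCDGS that has one component $(\{A_j\to w_j\},F_j)$ for each rule. In mode $\leq k$ such a component may perform a single step, which is exactly the rule application under the forbidden-context check.

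The main obstacle is that a component may also perform up to $k$ steps. Only the first of these is guarded by the entry condition, so later steps could violate the forbidden context, for instance when $w_j$ contains symbols of $F_j$ or another occurrence of $A_j$ is rewritten afterwards. I would try two fixes, in this order.
\begin{enumerate}
\item Argue that the repeated steps are harmless. After one application of $A_j\to w_j$, the sentential form $u'$ is one the $\textsf{fRC}$ grammar could also have reached. A second application would be allowed there whenever $|u'|_{F_j}=0$, and it would correspond to re-entering the same component. So the only dangerous case is when $w_j$ itself introduces symbols of $F_j$ or $A_j\in F_j$.
\item Avoid the issue by marking. Turn each component into $\{A_j\to X_j\}$ with forbidden set $F_j\cup\{X_\ast\}$, where $X_\ast$ ranges over all new symbols $X_i$, and add a second component $\{X_j\to w_j\}$. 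Let the right-hand side be the single new symbol $X_j$, so it cannot be rewritten again within the same component. The remaining risk is that further $A_j$ occurrences are rewritten in the same visit. I would block this by having each component's rule write a distinct marker and forbid any sentential form containing two markers, e.g.\ by a companion component that merely checks the markers.
\end{enumerate}

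As an alternative route for $\supseteq$, chain the earlier results instead. $\textsf{fRC}^{\,(-\emptyword)}=\textsf{ORD}^{\,(-\emptyword)}$ holds by the classical equivalence. Remark~\ref{rem:PCD-survey}(3) gives $\textsf{ORD}^{\,(-\emptyword)}\subseteq\textsf{PCD}^{\,(-\emptyword)}(\leq k)$. The first equality above then identifies $\textsf{PCD}^{\,(-\emptyword)}(\leq k)$ with $\textsf{CD}^{\,(-\emptyword)}(\leq k,\text{frc})$.

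This chain closes the cycle without any direct construction, so I would present it as the main argument and keep the direct construction only as intuition.
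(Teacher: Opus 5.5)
Your final argument is the same \emph{Ringschluss} the paper uses: $\textsf{CD}^{\,(-\emptyword)}(\leq k,\text{frc})\subseteq\textsf{fRCCD}^{\,(-\emptyword)}(\leq k)=\textsf{fRC}^{\,(-\emptyword)}=\textsf{ORD}^{\,(-\emptyword)}\subseteq\textsf{PCD}^{\,(-\emptyword)}(\leq k)=\textsf{CD}^{\,(-\emptyword)}(\leq k,\text{frc})$, via Lemma~\ref{lem:CDfrc-in-fRCCD}, Theorem~\ref{thm:frccd1=frc}, \cite{May72}, Theorem~3 of \cite{MitPauRoz94} and Theorem~\ref{thm:pcd=cd}, so the proposal is correct and follows the paper's route. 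The tentative direct constructions for the inclusion $\supseteq$ are not needed, and you are right not to rely on them.
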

\begin{proof}
    This follows per \textit{Ringschluss} from Lemma~\ref{lem:CDfrc-in-fRCCD}, Theorem~\ref{thm:frccd1=frc}, Theorem~8 from~\cite{May72}, Theorem 3 from~\cite{MitPauRoz94}, and Theorem~\ref{thm:pcd=cd}.\qed
\end{proof}
\begin{toappendix}
\begin{remark}
    For the $t$-mode case of Lemma~\ref{thm:pcd=cd}, we can also give a direct simulation of PCDGS via CDGSfrc that is possibly of some independent interest.

For the $t$-mode, in a preparatory step, consider each $P_i$ and define $N_i$ as the set of all left-hand sides and $\Sigma_i$ as the set of all symbols that do occur in rules of $P_i$ but not on any left-hand side. Let $G_i(A)=(N_i,\Sigma_i,P_i,A)$ for any $A\in N_i$ and define for this context-free grammar $\cL_i(A)=\cL(G_i(A))$. Now observe that component $P_i$ will prevent $P_j$ from being applied if and only if the current sentential form $w$ contains any symbol $A\in N_i$ such that $\cL_i(A)\neq\emptyset$. (For context-free grammars, one can efficiently decide non-emptiness.) Hence, in the case of the $t$-mode, we only add those $A\in N_i$ with $\cL_i(A)\neq\emptyset$ into $F_j$.
\qed  
\end{remark}

\end{toappendix}

\LV{
This clarifies most relations between the language classes discussed in the paper for most modes, as can be also seen in Figure~\ref{fig:survey-of-results}. However, there are still some open questions concerning the $\geq k$-mode, which we discuss now.

 \begin{lemma}\label{lem:cd-in-cdfrc}
     For $k\geq 1$, $\textsf{CD}^{\,(-\emptyword)}(\geq k) \subset \textsf{CD}^{\,(-\emptyword)}(\geq k,\text{frc})$.
 \end{lemma}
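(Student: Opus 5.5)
The statement has two parts. One is the inclusion $\textsf{CD}^{\,(-\emptyword)}(\geq k)\subseteq \textsf{CD}^{\,(-\emptyword)}(\geq k,\text{frc})$. The other is strictness. We treat them separately.

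\emph{Inclusion.} This part is immediate. A classical CDGS $(N,\Sigma,P_1,\dots,P_n,S)$ is the CDGS with forbidden random context entry conditions in which every $F_i=\emptyset$. With empty forbidden sets the entry condition holds vacuously, so $\Rightarrow_i^{\geq k,\text{rc}}$ coincides with $\Rightarrow_i^{\geq k}$ and the two systems generate the same language. This argument works verbatim with and without erasing rules.

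\emph{Strictness.} We need a language in $\textsf{CD}^{\,(-\emptyword)}(\geq k,\text{frc})$ that is not in $\textsf{CD}(\geq k)$. Classically, $\textsf{CD}(\geq k)\subseteq \textsf{MAT}$ (matrix languages without appearance checking), both in the erasing and non-erasing case (Csuhaj-Varj\'u et al.). So it suffices to exhibit a language in $\textsf{CD}^{-\emptyword}(\geq k,\text{frc})$ that is not in $\textsf{MAT}$. Erasing rules must be avoided here: the classical inclusion into $\textsf{MAT}$ also covers $\textsf{CD}(\geq k)$ with erasing, so a non-erasing witness serves both claims.

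The candidate witness is $L=\{a^{2^n}\mid n\geq 0\}$. It is a unary non-semilinear language. By the Hauschildt--Jantzen result, every unary language in $\textsf{MAT}$ (even with erasing) is regular, so $L\notin\textsf{MAT}$. We then build a frcCDGS in $\geq k$-mode for $L$, adapting Example~\ref{ex:a^2^n-OCDGS}. The component-level forbidden contexts play the role that the blocking rules $B\to B$ and $C\to C$ played there.

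The main obstacle is that $\geq k$-mode does not force a component to rewrite every symbol, unlike $t$-mode. A doubling phase could therefore stop halfway, leaving a mixture of old and new symbols, and the system could then switch phases. The plan is to avoid relying on exhaustive application. Instead, we use distinct colours for ``old'' and ``new'' generations, enforced through entry conditions:
\begin{itemize}
\item a doubling component $\{A\to BB\}$ with $F=\{C,D\}$ (here $B$ denotes the new generation);
\item a recolouring component $\{B\to A\}$ with $F=\{A,C\}$, so that it can only be entered after all $A$ are gone;
\item a termination route $\{A\to C\}$, then $\{C\to a\}$ with forbidden set $\{A,B\}$.
\end{itemize}
Two issues then have to be checked. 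First, a component may be left while symbols of the old colour remain; the next component entered must then find a forbidden symbol. A partial doubling leaves $A$'s, which blocks recolouring. The doubling component itself is re-entrant, however, so it must simply continue doubling, and that causes no harm. Second, the $\geq k$ requirement must be met on short sentential forms. We handle this by padding with dummy self-loop rules ($X\to X$) or auxiliary counter nonterminals, as in the prolongation technique used in Theorem~\ref{thm:mode-hierarchy-collaps}, so that every component can always perform at least $k$ steps on the intended forms. Small words $a^{2^n}$ with $2^n<k$ are then added separately, since $\textsf{CD}(\geq k,\text{frc})$ is closed under union with finite sets.

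The final step is to verify by induction on the number of component switches that every terminal word derived has the form $a^{2^n}$. If the verification of this gadget becomes too delicate, the fallback is to combine the equalities already proved: $\textsf{CD}(\geq k,\text{frc})\supseteq \textsf{CD}(t,\text{frc})$-style constructions or $\textsf{fRC}\subseteq\textsf{CD}(\geq k,\text{frc})$ obtained via self-loop prolongation of each rule component. Since $\textsf{fRC}$ contains non-semilinear unary languages, this would again separate it from $\textsf{MAT}\supseteq\textsf{CD}(\geq k)$.
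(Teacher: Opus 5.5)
Your overall strategy is the paper's: the witness $L=\{a^{2^n}\mid n\geq 0\}$, the inclusion $\textsf{CD}(\geq k)\subseteq\textsf{MAT}$, and Hauschildt--Jantzen to exclude $L$ from $\textsf{MAT}$, followed by an explicit frcCDGS for $L$. The gap is that this explicit system generates too much. Entry conditions are checked only when a component is entered. So your recolouring component $\{B\to A\}$ (forbidden set $\{A,C\}$) may be left after recolouring only some of the $B$'s. The forbidden set $\{C,D\}$ of the doubling component does not contain $B$, so the doubling component can then be entered on this mixed form. It doubles the freshly recoloured symbols while old $B$'s are still present. For $k=1$ this gives
$A\Rightarrow BB\Rightarrow AB\Rightarrow BBB\Rightarrow AAA\Rightarrow CCC\Rightarrow aaa$.
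For general $k$, take a reachable $B^{2^n}$ with $2^n>k$, recolour exactly $k$ symbols, and double exactly those $k$. This gives $B^{2^n+k}$ and eventually $a^{2^n+k}$ with $2^n<2^n+k<2^{n+1}$. So the final induction you postpone cannot succeed. Your remark that re-entering the doubling component ``causes no harm'' is exactly where it fails.

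The repair is to make every partially completed phase a dead end: each component must forbid every other colour, and in particular the doubling component must forbid $B$. This is the paper's system:
$P_1=\{A\to B,A\to C,A\to A\}$ with $F_1=\{B,C\}$,
$P_2=\{B\to AA,B\to B\}$ with $F_2=\{A,C\}$, and
$P_3=\{C\to a,C\to C\}$ with $F_3=\{A,B\}$.
Any sentential form containing two different nonterminals blocks all three components. The self-loops $A\to A$, $B\to B$, $C\to C$ supply the padding for the $\geq k$ requirement on every intended form. So short words need no separate treatment, and you need no (unproved) closure under union with finite sets. Your fallback does not close the gap either, because $\textsf{fRC}\subseteq\textsf{CD}(\geq k,\text{frc})$ is not established. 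Naive self-loop prolongation of a one-rule component $\{A\to w\}$ is also unsound: since the forbidden context is checked only at entry, $A\to w$ could be applied again after $w$ has introduced forbidden symbols.
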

 \begin{proof}
     The inclusion of $\textsf{CD}^{\,(-\emptyword)}(\geq k)$ in $\textsf{CD}^{\,(-\emptyword)}(\geq k,\text{frc})$ is obvious. 
     From \cite{HauJan94}, we know that $L=\{a^{2^n}\}$ is not in $\textsf{MAT}^{\,(-\emptyword)}$. Since $\textsf{CD}^{\,(-\emptyword)}(\geq k) \subseteq \textsf{MAT}^{\,(-\emptyword)}$ (see~\cite{MitPauRoz94}), we can conclude $L \notin \textsf{CD}^{(-\emptyword)}(\geq k)$. 
     
     Consider the frcCDGS~$G=(N,\Sigma, (P_1,F_1),(P_2,F_2),(P_3,F_3),A)$ where $P_1=\{A\to B,A \to C, A\to A \}$ with $F_1=\{B,C\}$, $P_2=\{B\to AA, B\to B\}$ with $F_2=\{A,C\}$, and $P_3=\{C\to a, C\to C\}$ with $F_3=\{A,B\}$. It is easily seen that $L(G,\geq k, \text{frc})=L$ and therefore $L \in \textsf{CD}^{\,(-\emptyword)}(\geq k,\text{frc})$. With that, the proposed inclusion is strict.\qed
 \end{proof}
 }

\LV{
\begin{lemma}\label{lem:cdfrc-k=2}
    For all $k\geq 2$, $\textsf{CD}^{\,(-\emptyword)}(\geq k, \text{frc})=\textsf{CD}^{\,(-\emptyword)}(\geq 2, \text{frc})$.
\end{lemma}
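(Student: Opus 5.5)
We show two inclusions: $\textsf{CD}^{\,(-\emptyword)}(\geq k,\text{frc})\subseteq\textsf{CD}^{\,(-\emptyword)}(\geq 2,\text{frc})$ and, conversely, $\textsf{CD}^{\,(-\emptyword)}(\geq 2,\text{frc})\subseteq\textsf{CD}^{\,(-\emptyword)}(\geq k,\text{frc})$ for every $k\geq 2$. Both are direct simulations between frcCDGS. The forbidden entry condition is checked only once, on the sentential form at which a component is entered. So each simulation must keep the entry check of an original component at the moment the simulation of that component starts, and must then chain the auxiliary components with forbidden sets that exclude all other simulations.

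For $\geq k\subseteq\geq 2$, take $G=(N,\Sigma,(P_1,F_1),\dots,(P_n,F_n),S)$ working in the $\geq k$-mode. We adapt the counting idea from the proof of Theorem~\ref{thm:mode-hierarchy-collaps}, but place the counter in the entry conditions instead of in rule contexts.
\begin{itemize}
\item Start with $S'\to YS$. Component $(P_i^{\text{in}},F_i\cup\{\text{all markers}\}-Y)$ contains $Y\to Y_{i,1}$ and one copy of each rule of $P_i$. It therefore performs at least two steps, one of them possibly the marker change, and its forbidden set performs the original entry check $F_i$.
\item Counting components $(P_{i,\ell},\{\text{markers}\}-Y_{i,\ell})$ contain the rules of $P_i$ together with $Y_{i,\ell}\to Y_{i,\ell+1}$.
\item A closing component turns $Y_{i,k}$ (or $Y_{i,\ge k}$) back into $Y$, or erases the marker; in the non-erasing case, replace $Y$ by a terminal, as in Theorem~\ref{thm:GCac-characterization-by-OCD-all-modes}.
\end{itemize}
The difficulty is that in $\geq2$-mode every visit may apply several original rules, so marker increments and original rule applications are not synchronized one to one. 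To fix this, every original rule $A\to w$ of $P_i$ inside a counting component is replaced by $A\to X^{w}_{\ell}$, forbidding further $X$ symbols via the marker discipline. In addition, each component's exactly-two-step core is forced by a $\#$-style intermediate symbol, as in Lemma~\ref{lem:fRCCDinCDfrc=2}. As a result, each original step raises the counter by exactly one. Because the $\geq k$-mode of $G$ allows arbitrarily many steps beyond $k$, counting can saturate at $k$, and rules of $P_i$ stay available in the saturated component.

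For $\geq 2\subseteq\geq k$, use prolongation: a component of the $\geq2$ system is simulated by a component in which $k-2$ dummy steps on counting symbols $\#_1,\dots,\#_{k-2}$ are mandatory. These symbols are introduced by the first rule application and removed in order, which is the trick already sketched in the proof of Theorem~\ref{thm:frc=k}. A separate cleanup component, whose forbidden set contains all working markers, then removes the last leftover marker, again padded to $k$ steps.

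The main obstacle is soundness of the first inclusion. We must show that interleaving, i.e., entering a component of a different original $P_j$ while the counter for $P_i$ is below $k$, is impossible. Entry conditions check only the start of a visit, and within a visit arbitrary rule mixes happen. We would handle this by putting every foreign marker $Y_{j,\ell}$ and every pending $X^w_\ell$ into the forbidden set of each component, and by ensuring that any visit leaving an inconsistent sentential form produces a trap symbol that no component can remove.
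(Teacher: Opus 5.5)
\textbf{There is a genuine gap in the inclusion $\geq k\subseteq\geq 2$, exactly where you locate the main obstacle.} In your entry component $P_i^{\text{in}}$ the marker change $Y\to Y_{i,1}$ is optional. A visit may therefore consist of two (or more) original rules of $P_i$ only, leaving the neutral marker $Y$ untouched. The resulting sentential form looks exactly like a legitimate state between two activations: it has no foreign marker, no pending $X$-symbol and no trap symbol. So every $P_j^{\text{in}}$ (and the closing component) may be entered next, which simulates an activation of $P_i$ with fewer than $k$ steps.

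Concretely, take $k=3$, $(P_1,\emptyset)$ with $P_1=\{S\to AB,\,A\to a,\,B\to b\}$ and $(P_2,\emptyset)$ with $P_2=\{B\to B',\,B'\to B'',\,B''\to c\}$. The system $G$ generates only $ab$. Your system, however, performs $YS\Rightarrow YAB\Rightarrow YaB$ in one visit of $P_1^{\text{in}}$, then $YaB\Rightarrow YaB'\Rightarrow YaB''\Rightarrow Yac$ in one visit of $P_2^{\text{in}}$, and finally yields $ac$.

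The missing idea is to make leaving the neutral state compulsory and free of original rules. The paper does this with a separate pick component $\{Y\to Y',\,Y'\to Y_i\}$ whose forbidden set is $F_i$ together with all other markers. In the $\geq 2$-mode its only possible visit is $Y\Rightarrow Y'\Rightarrow Y_i$, and it performs the entry check $F_i$ on the otherwise unchanged sentential form. From then on the current marker $Y_i$ resp.\ $Y_{i,\ell}$ is admitted only by the corresponding component of $P_i$'s counting chain, so the only way back to $Y$ leads through all $k$ increments.

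\textbf{The synchronisation fix you propose instead does not work, and it is not needed.} It does not work because the $\#$-device of Lemma~\ref{lem:fRCCDinCDfrc=2} enforces exactly two steps only since that lemma concerns the $=2$-mode. In the $\geq 2$-mode a visit can go on, and forbidden sets are tested only on entry and cannot count occurrences. So nothing prevents $A\to X^w_\ell$ (or $A\to\#$, $\#\to\dots$) from being applied to several occurrences within one visit, and ``each original step raises the counter by exactly one'' is not achieved.

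It is not needed because the counter only has to be a lower bound. The marker occurs once and each counting component contains a single increment rule. Since a visit has at least two steps, every visit that increments also applies at least one original rule of $P_i$. A visit that does not increment leaves a marker that (below $k$) only the same counting component admits, so it merely continues the same activation. Hence reaching $Y_{i,k}$ certifies at least $k$ steps of $P_i$ since the $F_i$-check, and surplus steps are harmless in the $\geq k$-mode.

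With the pick component added and the original rules copied verbatim into the counting components, your chain becomes essentially the paper's construction. The converse inclusion, which the paper regards as obvious, is not where the difficulty lies.
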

\begin{proof}
 Since $\textsf{CD}^{(-\emptyword)}(\geq k, \text{frc}) \supseteq \textsf{CD}^{(-\emptyword)}(\geq 2, \text{frc})$ is obviously true, it is sufficient to show the inclusion of the other direction. \\
    Let $G=(N,\Sigma, (P_1,F_1),\dots,(P_n,F_n),S)$ be an frcCDGS working in $\geq k$-mode with components $G_i=(N,\Sigma,P_i, S)$, for $i\in[n]$. Construct an frcCDGS\begin{align*}
				\tilde{G}=(\,&N \cup Y_\cup,\Sigma, (P_0,F_0), (P_{\text{end}},F_{\text{end}}),\\ &((P_{\text{pick},i},F_{\text{pick},i}),
				(P_{i,0},F_{i,0}),\dots,(P_{i,k+1},F_{i,k+1}))_{i\in[n]},S'\,)\,
	\end{align*}
    where $Y_\cup$ denotes $ \{Y\} \cup \{Y_{i,j}\}_{i\in [n], j\in [0\dots k]}$.
    For each component~$P_i$ of~$G$ add the following components to~$\tilde{G}$:
    \begin{itemize}
        \item $P_{\text{pick},i}=\{Y\to Y', Y'\to Y_i\} 
        $ 
        with $F_{\text{pick},i}=F_i \cup Y_\cup -Y $.
        With this, the $i$-th component is picked to be applied next. The forbidden context~$F_i$ has to be fulfilled to enter this component.
        
        \item $P_{i,0}=\{Y_i\to Y_{i,1}\} \cup 
        \{A \to w \mid A\to w \in P_i \}$ with $F_{i,0}=F_i \cup Y_\cup - Y_i $. Since the forbidden context~$F_i$ of component~$P_i$ has already been checked in $P_{\text{pick},i}$, it is not necessary to check it again when entering $P_{i,0}$.
        The nonterminal~$Y_{i,j}$ denotes that rules from component~$G_i$ are currently applied and that~$j$ derivation steps have already been taken. \\ After the component is left after the rule $Y_i\to Y_{i,1}$ was applied, it is blocked to be reentered as $Y_{i,1}\in F_{i,0}$. If the rule $Y_i\to Y_{i,1}$ was not applied, the only component that can be entered is $P_{i,0}$. The reentering of this component would not affect the objective of performing at least $k$ steps with $P_i$. 
        
        \item for $l\in [k-1]$: $P_{i,l}=\{Y_{i,l}\to Y_{i,l+1}\} \cup \{A\to w \mid A\to w \in P_i\}$ with $F_{i,l}= Y_\cup - Y_{i,l}$. These components are used to count that at least $k$ steps are performed. Like component~$P_{i,0}$, these components can be reentered when the counter~$Y_{i,l}$ is not increased. 
        
        \item $P_{i,k}=\{Y_{i,k}\to Y_{i,k}\} \cup \{A\to w \mid A\to w \in P_i\}$ with $F_{i,k}= Y_\cup - Y_{i,k}$. Now, at least $k$ steps according to~$P_i$ have been taken. This component can be reentered an arbitrary amount of times to perform $\geq k$ steps. 
        
        \item  $P_{i,k+1}=\{Y_{i,k}\to Y\} \cup \{(A\to w \mid A\to w \in P_i\}$ with $F_{i,k+1}= Y_\cup - Y_{i,k}$. With this, the derivation with rules from~$P_i$ ends. The counter~$Y_{i,k}$ is reset to~$Y$. Afterwards, the next component~$P_j$ can be picked by applying~$P_{\text{pick},j}$.
        
    \end{itemize}
    
    Further, add components $P_0=\{S'\to YS\}$ with $F_0=\emptyset$ and $P_{\text{end}}=\{Y\to \lambda\}$ with $F_{\text{end}}=N \cup Y_\cup - Y$ to start and end the computation, respectively.
\qed
\end{proof}
}

\LV{
\begin{corollary}
    For all $k\geq 2$, we find: $\textsf{CD}^{\,(-\emptyword)}(\geq k) \subseteq \textsf{CD}^{\,(-\emptyword)}(\geq k+1) \subset \textsf{CD}^{\,(-\emptyword)}(\geq k, \text{frc})=\textsf{CD}^{\,(-\emptyword)}(\geq k+1, \text{frc}) \subseteq \textsf{PCD}^{\,(-\emptyword)}(\geq k) \subseteq \textsf{PCD}^{\,(-\emptyword)}(\geq k+1) $.
\end{corollary}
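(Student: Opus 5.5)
The corollary is a chain of six relations, and I would establish each link separately from results already stated, working from left to right.

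The first link, $\textsf{CD}^{\,(-\emptyword)}(\geq k)\subseteq\textsf{CD}^{\,(-\emptyword)}(\geq k+1)$, is the classical prolongation result for CDGS. If it has to be redone, I would equip each component with counting nonterminals so that every $\geq k$-application becomes a $\geq k+1$-application. The extra step rewrites a marker symbol such as $Y\to Y'$, and a second component resets it. Since $k\geq 2$, this stays inside the $\geq$-modes.

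The second link, $\textsf{CD}^{\,(-\emptyword)}(\geq k+1)\subset\textsf{CD}^{\,(-\emptyword)}(\geq k,\text{frc})$, is strict. I would obtain it by composing two earlier results. Lemma~\ref{lem:cd-in-cdfrc}, applied with $k+1$, gives $\textsf{CD}(\geq k+1)\subsetneq\textsf{CD}(\geq k+1,\text{frc})$. Lemma~\ref{lem:cdfrc-k=2} gives $\textsf{CD}(\geq k+1,\text{frc})=\textsf{CD}(\geq 2,\text{frc})=\textsf{CD}(\geq k,\text{frc})$. The witness $\{a^{2^n}\mid n\geq 0\}$ from Lemma~\ref{lem:cd-in-cdfrc} then separates the classes. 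The third relation, the equality $\textsf{CD}(\geq k,\text{frc})=\textsf{CD}(\geq k+1,\text{frc})$, is Lemma~\ref{lem:cdfrc-k=2} applied to both $k$ and $k+1$.

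The fourth link is $\textsf{CD}(\geq k+1,\text{frc})\subseteq\textsf{PCD}(\geq k)$. Lemma~\ref{lem:cd<pcd} gives $\textsf{CD}(m,\text{frc})\subseteq\textsf{PCD}(m)$ for every mode $m$. Applied with $m=\geq k$ and combined with the equality just obtained, it yields $\textsf{CD}(\geq k+1,\text{frc})=\textsf{CD}(\geq k,\text{frc})\subseteq\textsf{PCD}(\geq k)$.

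The last link, $\textsf{PCD}(\geq k)\subseteq\textsf{PCD}(\geq k+1)$, is the main obstacle, because no earlier result covers it. Plain prolongation is delicate here: the priority condition compares applicability of components in the same mode, so lengthening components could change which higher-priority components block lower ones. I would try two routes.

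\begin{itemize}
\item \emph{Direct simulation.} Add a control symbol and, for each original component $P_i$, a prolonged copy that performs one extra marker step. Keep the original priorities between the copies. Then argue that $P_j$ is $\geq k$-applicable to $x$ iff its prolonged copy is $\geq k+1$-applicable to $x$ with the marker present, so the blocking relation is preserved.
\item \emph{Fallback via the frc class.} Simulate $\textsf{PCD}(\geq k)$ by $\textsf{CD}(\geq k+1,\text{frc})$, which suffices by Lemma~\ref{lem:cd<pcd}. Entry conditions would encode higher priority only when the $\geq k$-applicability of the higher component can be expressed as a forbidden condition on the sentential form. That does not hold in general, so I would expect to rely on the direct simulation.
\end{itemize}
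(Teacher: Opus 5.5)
Your first four links follow the paper. You use the monograph result for $\textsf{CD}(\geq k)\subseteq\textsf{CD}(\geq k+1)$. You use Lemma~\ref{lem:cd-in-cdfrc} at $k+1$ with Lemma~\ref{lem:cdfrc-k=2} for the strict inclusion and the equality. You use Lemma~\ref{lem:cd<pcd} with $m={\geq k}$ for the step into $\textsf{PCD}(\geq k)$, which the paper leaves implicit. For $\textsf{PCD}(\geq k)\subseteq\textsf{PCD}(\geq k+1)$ the paper only says it holds ``in the same way as in the non-prioritized case''. You are right that this is not automatic, but your direct simulation is only announced, not carried out, so this link is not yet proved.

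Your intended equivalence is correct as far as it goes. With one occurrence of $Y$, the copy $P_j\cup\{Y\to Y'\}$ makes at most one marker step, and that step commutes with the $P_j$-steps. So the copy is $\geq k+1$-applicable to $xY$ iff $P_j$ is $\geq k$-applicable to $x$, and induction along $>$ carries this over to applicability under priorities. It fails, however, once the marker has been spent. On $xY'$ a copy is applicable only if $P_j$ can make $k+1$ steps. A higher component that can make exactly $k$ steps then stops blocking a lower one, and derivations the original forbids may become possible.

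The missing idea is to guarantee that the marker is present whenever priorities are evaluated:
\begin{itemize}
\item Put the restoring component at the \emph{top} of the priority order, as a chain $Y'\to Y_1\to\dots\to Y_k\to Y$. A one-rule reset, as in your sketch for the first link, can never run in $\geq k+1$-mode on a single marker. The chain is applicable exactly when the marker is spent, and then it blocks everything else. When $Y$ is present it is inapplicable and harmless.
\item Write the initialization $S'\to\dots\to YS$ and the final deletion of $Y$ as chains of $k+1$ rules as well.
\item Place the deletion below a component $\{X\to X\mid X\in N\}$, so that it fires only on terminal forms. An early deletion would distort the blocking relation in the same way as a spent marker.
\end{itemize}

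Finally, the statement also covers the $-\lambda$ case, where a separate marker cannot be erased. There the mark must ride on a symbol of the sentential form, in an ``available'' and a ``spent'' version that the simulated rules pass on. Terminals must then be represented by nonterminals until a guarded conversion phase. Your plan does not address this case.
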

\begin{proof}
    The inclusion $\textsf{CD}^{\,(-\emptyword)}(\geq k) \subseteq \textsf{CD}^{\,(-\emptyword)}(\geq k+1)$ was proved in~\cite{Csuetal94all}. The strict inclusion of $\textsf{CD}^{\,(-\emptyword)}(\geq k+1)$ in $\textsf{CD}^{\,(-\emptyword)}(\geq k+1, \text{frc})$ and the equality of $\textsf{CD}^{\,(-\emptyword)}(\geq k, \text{frc})$ and $ \textsf{CD}^{\,(-\emptyword)}(\geq k+1, \text{frc})$ were shown in Lemma~\ref{lem:cdfrc-k=2} and Lemma~\ref{lem:cd-in-cdfrc} respectively. Lastly, $\textsf{PCD}^{\,(-\emptyword)}(\geq k) \subseteq \textsf{PCD}^{\,(-\emptyword)}(\geq k+1)$ obviously holds in the same way as in the non-prioritized case. \qed
\end{proof}
}

\section{Summary and Open Problems}
\label{sec:conclusions}

	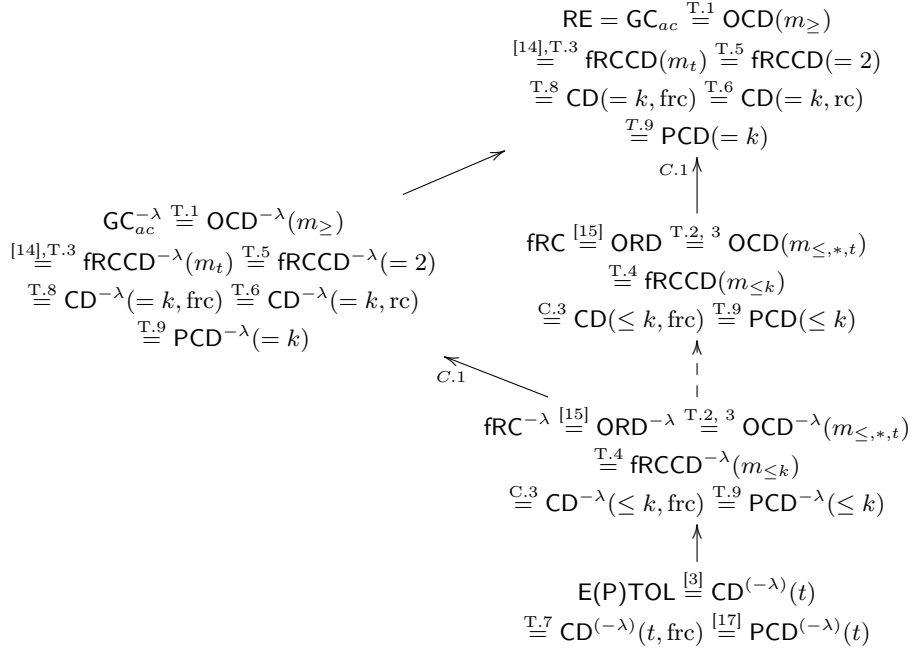
\begin{figure}[tb]
		\xymatrix@R=1.5em@ C=1.25em{
			& \txt{$\textsf{RE}
            =\textsf{GC}_{ac}\overset{\text{T.\ref{thm:GCac-characterization-by-OCD-all-modes}}}{=}\textsf{OCD}(m_{\geq})$\\
            $\overset{\text{\cite{Mas2009},T.\ref{thm:forbiddenCD=orderedCD}}}{=}\textsf{fRCCD}(m_t)
            \overset{\text{T.\ref{thm:mode-hierarchy-collaps}}}{=}\textsf{fRCCD}(=2)$\\
            $\overset{\text{T.\ref{thm:frc=k}}}{=}\textsf{CD}(=k,\text{frc})
            \overset{\text{T.\ref{thm:CsuDasKelPau-book}}}{=}\textsf{CD}(=k,\text{rc})$\\$\overset{T.\ref{thm:pcd=cd}}{=}\textsf{PCD}(=k)$}    \\
			\ar[ur] \txt{$\textsf{GC}_{ac}^{-\emptyword} 
            \overset{\text{T.\ref{thm:GCac-characterization-by-OCD-all-modes}}}{=}\textsf{OCD}^{-\emptyword}(m_{\geq})$\\
            $\overset{\text{\cite{Mas2009},T.\ref{thm:forbiddenCD=orderedCD}}}{=}\textsf{fRCCD}^{-\emptyword}(m_t) 
            \overset{\text{T.\ref{thm:mode-hierarchy-collaps}}}{=} \textsf{fRCCD}^{-\emptyword}(=2)$\\
            $\overset{\text{T.\ref{thm:frc=k}}}{=}\textsf{CD}^{-\emptyword}(=k,\text{frc})
            \overset{\text{T.\ref{thm:CsuDasKelPau-book}}}{=}\textsf{CD}^{-\emptyword}(=k,\text{rc})$\\
            $\overset{\text{T.\ref{thm:pcd=cd}}}{=}\textsf{PCD}^{-\emptyword}(=k)$}
            &  \ar[u]^{C.\ref{cor:frccd1<frccd2}} \txt{$\textsf{fRC} 
            \overset{\text{\cite{May72}}}{=} \textsf{ORD}
            \overset{\text{T.\ref{thm:ordered-t}, \ref{thm:forbiddenCD=orderedCD}}}{=} \textsf{OCD}(m_{\leq,*,t}) $ \\ 
            $ \overset{\text{T.\ref{thm:frccd1=frc}}}{=}\textsf{fRCCD}(m_{\leq k} ) $ \\ 
            $\overset{\text{C.\ref{cor:CD=fRC=PCD}}}{=} \textsf{CD}(\leq k, \text{frc}) 
            \overset{\text{T.\ref{thm:pcd=cd}}}{=} \textsf{PCD}(\leq k)$}  \\
			&  \ar[ul]^{C.\ref{cor:frccd1<frccd2}} \ar@{-->}[u] \txt{$\textsf{fRC}^{-\emptyword} 
            \overset{\text{\cite{May72}}}{=} \textsf{ORD}^{-\emptyword}
            \overset{\text{T.\ref{thm:ordered-t}, \ref{thm:forbiddenCD=orderedCD}}}{=} \textsf{OCD}^{-\emptyword}(m_{\leq, *,t}) $\\
            $\overset{\text{T.\ref{thm:frccd1=frc}}}{=}\textsf{fRCCD}^{-\emptyword}(m_{\leq k} )$ \\ 
            $\overset{\text{C.\ref{cor:CD=fRC=PCD}}}{=} \textsf{CD}^{-\emptyword}(\leq k, \text{frc}) 
            \overset{\text{T.\ref{thm:pcd=cd}}}{=} \textsf{PCD}^{-\emptyword}(\leq k)$}  \\
			& \ar[u] \txt{$\textsf{E(P)TOL} 
            \overset{\text{\cite{Csu94}}}{=}\textsf{CD}^{(-\emptyword)}(t)$\\ 
            $\overset{\text{T.\ref{thm:frc-CD}}}{=} \textsf{CD}^{(-\emptyword)}(t,\text{frc})
            \overset{\text{\cite{MitPauRoz94}}}{=}\textsf{PCD}^{(-\emptyword)}(t)$}
		}
    \caption{Dashed arrows denote inclusions and solid arrows denote strict inclusions.\\
    Note that $m_{\geq}$ denotes any mode from $ \{=k, \geq k \mid k\geq 2\}$, 
	$m_t $ from $ \{=k, \geq k \mid k\geq 2\}\linebreak[3] \cup \{t\}$, 
    \mbox{$m_{\leq k} $ from $ \{\leq k \mid k \geq 1 \} \cup \{*, = 1, \geq 1\}$}, 
    and $m_{\leq,*,t}$ from $\{\leq k \mid k \geq 1 \} \cup \{*,t\}$.%
    }
    \label{fig:survey-of-results}
	\end{figure}

In this paper, we revisited regulated variants of CD grammar systems.
Although many these formalisms have been introduced over three decades ago, many relations between them have been left open until now. We clarify the inclusions for most of the possible combinations. This is illustrated in Figure~\ref{fig:survey-of-results}.

Still, several questions remain open. For instance, it is an open question for more than five decades now if (disregarding the possibility to generate the empty word) ordered grammars without erasing rules are strictly less powerful than those with possibly having erasing rules.
It is also unclear if (or which of) the following inclusions are strict:
$$\textsf{CD}^{(-\lambda)}(\geq k,\text{frc})\subseteq \textsf{PCD}^{(-\lambda)}(\geq k)\subseteq \textsf{fRCCD}^{(-\lambda)}(\geq k)(=\textsf{GC}_{ac}^{-\lambda})\,.$$

Furthermore, notice that we only looked at forbidden context conditions, but 
permitting random context conditions also deserve a separate exploration.
Finally, we would suggest looking at descriptional complexity parameters, like number of components, number of nonterminals, number of rules per component etc. A systematic study is lacking here.

\paragraph{Acknowledgement.}
All authors thank DST-DAAD project (DST/IND/DAAD/P-01-2024(G) resp. DAAD PPP 57724085) for supporting the work carried out under the scope of the project.\SV{\todo{If UCNC does not allow for more space, we should cut out something from the conclusions.}}
\bibliographystyle{plain}
\bibliography{ab,rgs}
\end{document}